\newcommand{\eat}[1]{}
\newcommand{\nc}{\newcommand}
\def\extraspacing{\vspace{2mm}}
\nc{\cH}{{\mathcal H}}
\nc{\cO}{{\mathcal O}}
\nc{\cT}{{\mathcal T}}
\nc{\logspace}{\text{LOGSPACE}\xspace}
\nc{\logdcfl}{\text{LOGDCFL}\xspace}
\nc{\logcfl}{\text{LOGCFL}\xspace}
\nc{\nlogspace}{\text{NLOGSPACE}\xspace}
\nc{\ptime}{\text{PTIME}\xspace}
\nc{\np}{\text{np}\xspace}
\nc{\conp} {\textrm{co}\text{NP}\xspace}
\nc{\pspace}{\text{PSPACE}\xspace}
\nc{\exptime}{\text{EXPTIME}\xspace}
\nc{\nexptime}{\text{NEXPTIME}\xspace}
\newtheorem{example}{Example}
\newtheorem{theorem}{Theorem}
\newtheorem{definition}{Definition}
\newtheorem{corollary}{Corollary}
\newtheorem{lemma}{Lemma}
\newtheorem{proposition}{Proposition}
\nc{\SL}{SL}
\nc{\San}{\mathcal{A}}
\nc{\SanC}{\text{San-Clicks}}
\nc{\Sen}{\text{Sen}}
\newcommand{\nop}[1]{}
\begin{document}
\begin{sloppy}

\title{Publishing Search Logs -- A Comparative Study of Privacy Guarantees}

\author{
Michaela G\"otz\\
\and
Ashwin Machanavajjhala\\
\and
Guozhang Wang\\
\and
Xiaokui Xiao\\
\and
Johannes Gehrke\\
}

\maketitle

\begin{abstract}

Search engine companies collect the ``database of intentions'', the
histories of their users' search queries. These search logs are a gold
mine for researchers. Search engine companies, however, are wary of
publishing search logs in order not to disclose sensitive information.

In this paper we analyze algorithms for publishing frequent keywords,
queries and clicks of a search log.  We first show how methods that
achieve variants of $k$-anonymity are vulnerable to active attacks. We
then demonstrate that the stronger guarantee ensured by
$\epsilon$-differential privacy unfortunately does not provide any
utility for this problem. We then propose an algorithm ZEALOUS
and show how to set its parameters to achieve
$(\epsilon,\delta)$-probabilistic privacy. We also contrast our
analysis of ZEALOUS with an analysis by Korolova et
al.~\cite{KorolovaKMN09:PrivateQueries} that achieves
$(\epsilon',\delta')$-indistinguishability.

Our paper concludes with a large experimental study using real
applications where we compare ZEALOUS and previous work that achieves
$k$-anonymity in search log publishing. Our results show that ZEALOUS
yields comparable utility to $k-$anonymity while at the same time
achieving much stronger privacy guarantees.
\end{abstract}

\section{Introduction}

\vspace{2ex}
\emph{
Civilization is the progress toward a society of privacy. The savage's whole existence is public, ruled by the laws of his tribe. Civilization is the process of setting man free from men.} --- Ayn Rand.\\
\indent
\emph{My favorite thing about the Internet is that you get to go into the private world of real creeps without having to smell them.}
--- Penn Jillette.
\vspace{2ex}

Search engines play a crucial role in the navigation through the
vastness of the Web.  Today's search engines do not just collect and
index webpages, they also collect and mine information about their
users.  They store the queries, clicks, IP-addresses, and other
information about the interactions with users in what is called a {\em
search log}.
Search logs contain valuable information that search engines use to
tailor their services better to their users' needs. They enable the
discovery of trends, patterns, and anomalies in the search behavior of
users, and they can be used in the development and testing of new
algorithms to improve search performance and quality.  Scientists all
around the world would like to tap this gold mine for their own
research; search engine companies, however, do not release them
because they contain sensitive information about their users, for
example searches for diseases, lifestyle choices, personal tastes, and
political
affiliations.


The only release of a search log happened in 2006 by AOL, and it went
into the annals of tech history as one of the great debacles in the
search industry.\footnote{
\url{http://en.wikipedia.org/wiki/AOL_search_data_scandal}
describes the incident, which resulted in the resignation of AOL's CTO
and an ongoing class action lawsuit against AOL resulting from the
data release.}  AOL published three months of search logs of
$650\mathord{,}000$ users.  The only measure to protect user privacy
was the replacement of user--ids with random numbers --- utterly
insufficient protection as the New York Times showed by identifying a
user from Lilburn, Georgia~\cite{NYTimes:AOL}, whose search queries
not only contained identifying information but also sensitive
information about her friends' ailments.
%

The AOL search log release shows that simply replacing
user--ids with random numbers does not prevent information disclosure. Other ad--hoc methods have been studied and found to be similarly insufficient,
such as the removal of names, age, zip codes and other
identifiers~\cite{JonesKPT07:PrivacyQueryLogs} and the replacement of
keywords in search queries by random
numbers~\cite{KumarNPT07:AnonQueryLogs}.

In this paper, we compare  formal methods of limiting disclosure when publishing frequent keywords, queries, and clicks of a search log. The methods 
vary in the guarantee of disclosure limitations they provide and in
the amount of useful information they retain.
We first describe two negative results. We show that existing proposals
to achieve $k$-\emph{anonymity}~\cite{Samarati01} in search
logs~\cite{Adar07:AnonQueryLogs,MotwaniN:searchlogs,HeN09:HierarchicalAnon,YuanAnonymization09}
are insufficient in the light of attackers who can actively
  influence the search log.
We then turn to
\emph{differential privacy}~\cite{dwork06Calibrating}, a much stronger
privacy guarantee; however, we show that it is impossible to achieve
good utility
with differential
privacy.

We then describe Algorithm ZEALOUS\footnote{{\bf ZEA}rch
  {\bf LO}g p{\bf U}bli{\bf S}ing}, developed independently by Korolova
et al.~\cite{KorolovaKMN09:PrivateQueries} and
us~\cite{GoetzMWXG09:searchlogsV2} with the goal to achieve relaxations
of differential privacy.  Korolova et al.~showed how to set the
parameters of ZEALOUS to guarantee $(\epsilon,
\delta)$-indistinguishability~\cite{DworkKMMN06_OurDataOurselves}, and
we here offer a new analysis that shows how to set the parameters of
ZEALOUS to guarantee $(\epsilon, \delta)$\emph{-probabilistic
  differential privacy}~\cite{ashwin08:map}
(Section~\ref{sec:alg:prob}), a much stronger privacy guarantee as our
analytical comparison shows.

Our paper concludes with an extensive experimental evaluation, where
we compare the utility of various algorithms that guarantee anonymity
or privacy in search log publishing.
Our
evaluation includes applications that use search logs for improving
both search experience and search performance, and our results show
that ZEALOUS' output is sufficient for these applications
while achieving strong formal privacy guarantees.

We believe that the results of this research enable search engine
companies to make their search log available to researchers without
disclosing their users' sensitive information: Search engine companies
can apply our algorithm
to generate statistics that are
$(\epsilon, \delta)$-probabilistic differentially private while
retaining good utility for the two applications we have tested.
Beyond publishing search logs we believe
that our findings are of interest when publishing frequent
itemsets, as ZEALOUS protects privacy against much stronger
attackers than those considered in existing work on privacy-preserving
publishing of frequent items/itemsets \cite{Luo:SurveyItemsets09}.

The remainder of this paper is organized as follows. We start with
some background in Section \ref{sec:prelim}. Our
negative results are presented in Section~\ref{sec:negative}. We then describe Algorithm ZEALOUS and its analysis in Section \ref{sec:prob_privacy}.  We
compare indistinguishability with probabilistic differential privacy
in Section \ref{sec:comp}. Section \ref{sec:exp_para} shows the
results of an extensive study of how to set parameters in ZEALOUS, and
Section \ref{sec:apps} contains a thorough evaluation of ZEALOUS in
comparison with previous work. We conclude with a discussion of
related work and other applications.

\section{Preliminaries} \label{sec:prelim}

In this section we introduce the problem of publishing frequent
keywords, queries, clicks and other items of a search log.

\subsection{Search Logs}

Search engines such as Bing, Google, or Yahoo log interactions with
their users.  When a user submits a query and clicks on one or more
results, a new entry is added to the search log. Without loss of
generality, we assume that a search log has the following schema:
\[
\langle \textsc{user-id,  query, time, clicks}\rangle,
\]
where a \textsc{user-id} identifies a user, a \textsc{query} is a set
of keywords, and \textsc{clicks} is a list of \emph{urls} that the
user clicked on.  The user-id can be determined in various ways; for
example, through cookies, IP addresses or user accounts.  A \emph{user
  history} or \emph{search history} consists of all search entries
from a single user. Such a history is usually partitioned into
\emph{sessions} containing similar queries; how this partitioning is
done is orthogonal to the techniques in this paper.  A \emph{query
  pair} consists of two subsequent queries from the same user within the same session.
%

We say that a user history \emph{contains} a keyword $k$ if there
exists a search log entry such that $k$ is a keyword in the query of
the search log.  A \emph{keyword histogram} of a search log $S$
records for each keyword $k$ the number of users $c_k$ whose search
history in $S$ contains $k$. A keyword histogram is thus a set of
pairs $(k,c_k)$. We define the \emph{query histogram}, the \emph{query
  pair histogram}, and the \emph{click histogram} analogously.  We
classify a keyword, query, consecutive query, click in a histogram to
be \emph{frequent} if its count exceeds some predefined threshold
$\tau$; when we do not want to specify whether we count keywords,
queries, etc., we also refer to these objects as \emph{items}.

With this terminology, we can define our goal as publishing frequent
items (utility) without disclosing sensitive information about the
users (privacy). We will make both the notion of utility and privacy
more formal in the next sections.


\subsection{Disclosure Limitations for Publishing Search Logs}

A simple type of disclosure is the identification of a particular
user's search history (or parts of the history) in the published search
log. The concept of $k$-anonymity has been introduced to avoid such
identifications.
\extraspacing
\begin{definition}[$k$-anonymity~\cite{Samarati01}] \label{def:anonymity}
  \em A search log is $k$-anonymous if the search history of every
  individual is indistinguishable from the history of at least $k-1$
  other individuals in the published search log.
\end{definition}
\extraspacing
There are several proposals in the literature to achieve different
variants of $k$-anonymity for search logs. Adar proposes to partition
the search log into sessions and then to discard queries that are
associated with fewer than $k$ different user-ids. In each session the
user-id is then replaced by a random
number~\cite{Adar07:AnonQueryLogs}. We call the output of Adar's
Algorithm a \emph{$k$-query anonymous search log}.
Motwani and Nabar add or delete keywords from sessions until each
session contains the same keywords as at least $k-1$ other sessions in
the search log~\cite{MotwaniN:searchlogs}, following by a replacement of the
user-id by a random number. We call the output of this algorithm a
\emph{$k$-session anonymous search log}.  He and Naughton generalize
keywords by taking their prefix until each keyword is part of at least
$k$ search histories and publish a histogram of the partially
generalized keywords \cite{HeN09:HierarchicalAnon}. We call the output
a \emph{$k$-keyword anonymous search log}. Efficient ways to anonymize a
search log are
also discussed by Yuan
et al.~\cite{YuanAnonymization09}.


Stronger disclosure limitations try to limit what an attacker can learn about a user.
Differential privacy guarantees that an attacker learns roughly the
same information about a user whether or not the search history of
that user was included in the search log~\cite{dwork06Calibrating}.
Differential privacy
has previously been applied to contingency
tables~\cite{barakCDKMT07:holistic}, learning
problems~\cite{BlumLR08,KasiviswanathanLNRS08}, synthetic data
generation~\cite{ashwin08:map} and more.

\extraspacing
\begin{definition}[$\epsilon$-differential privacy \cite{dwork06Calibrating}] \label{def:ediff}  \em
An algorithm $\mathcal{A}$ is $\epsilon$-differentially private if for
all search logs $S$ and $S'$ differing in the search history of a
single user and for all  output search logs $O$:
\[Pr[\mathcal{A}(S)=O] \leq e^\epsilon  Pr[\mathcal{A}(S')=O].\]
\end{definition}
\extraspacing
This definition ensures that the output of the algorithm is insensitive to changing/omitting the complete search history of a single user.
We will refer to search logs that only differ in the search history of
a single user as \emph{neighboring search logs}.
Similar to the variants of $k$-anonymity we could also
define variants of differential privacy by looking at neighboring
search logs that differ only in the content of one session, one query
or one keyword. However, we chose to focus on the strongest definition
in which an attacker learns roughly the same about a user even if that user's whole search history was omitted.

Differential privacy is a very strong guarantee and in some cases it
can be too strong to be practically achievable.
%
We will review two relaxations that have been proposed in the literature.
 Machanavajjhala et
al.~proposed the following probabilistic version of differential
privacy.
\extraspacing
\begin{definition} [probabilistic differential privacy \cite{ashwin08:map}]
\label{def:prob-diff} \em
An algorithm $\mathcal{A}$ satisfies $(\epsilon,\delta)$-probabilistic
differential privacy if for all search logs $S$ we can divide the
output space $\Omega$ into two sets $\Omega_1, \Omega_2$ such that
\[(1) \Pr[\mathcal{A}(S) \in \Omega_2] \leq  \delta, \text{ and }\]
 for all neighboring search logs $S'$ and for all $O \in \Omega_1$:
\begin{align*}
(2) &  e^{-\epsilon} \Pr[\mathcal{A}(S')=O] \leq \Pr[\mathcal{A}(S)=O] \leq  e^{\epsilon} \Pr[\mathcal{A}(S')=O] 
\end{align*}
\end{definition}
\extraspacing
This definition guarantees that algorithm $\mathcal{A}$ achieves
$\epsilon$-differential privacy with high probability ($\geq 1-
\delta$). The set $\Omega_2$ contains all outputs that are considered
privacy breaches according to $\epsilon$-differential privacy; the
probability of such an output is bounded by $\delta$.

The following relaxation
has been proposed by
Dwork et al.\ \cite{DworkKMMN06_OurDataOurselves}. 
\extraspacing
\begin{definition} [indistinguishability \cite{DworkKMMN06_OurDataOurselves}] \label{def:indist}
  \em An algorithm $\mathcal{A}$ is $(\epsilon,\delta)$-indistinguishable
  if for all search logs $S, S'$ differing in one user history and for
  all subsets $\mathcal{O}$ of the output space $\Omega$:
	\[ \Pr[\mathcal{A}(S) \in \mathcal{O}] \leq e^{\epsilon}\Pr[\mathcal{A}(S') \in \mathcal{O}] + \delta \]
\end{definition}
\extraspacing
We will compare these two definitions in Section~\ref{sec:comp}. In
particular, we will show that probabilistic differential privacy
implies indistinguishability, but the converse does not
hold:  We show that there exists an algorithm that is $(\epsilon',
\delta')$-indistinguishable yet 
not $(\epsilon,\delta)$-probabilistic differentially private for any
$\epsilon$ and any $\delta<1$, thus showing that
$(\epsilon,\delta)$-probabilistic differential privacy is clearly
stronger than $(\epsilon', \delta')$-indistinguishability.

\subsection{Utility Measures}

We will compare the utility of algorithms producing sanitized search
logs both theoretically and experimentally.

\subsubsection{Theoretical Utility Measures}\label{sec:utility_def}
For simplicity, suppose we want to publish all items (such as
keywords, queries, etc.) with frequency at least $\tau$ in a search
log; we call such items \emph{frequent items}; we call all other items
\emph{infrequent items}.
Consider a discrete domain of items $\mathcal{D}$. Each user
contributes a set of these items to a search log $S$.  We denote by
$f_d(S)$ the frequency of item $d \in \mathcal{D}$ in search log
$S$. We drop the dependency from $S$ when it is clear from the
context.

We define the inaccuracy of a (randomized) algorithm as the expected
number of items it gets wrong, i.e., the number of frequent items that
are not included in the output, plus the number of infrequent items
that are included in the output.
We do not expect an algorithm to be perfect. It may make mistakes for
items with frequency very close to $\tau$, and thus we do not take
these items in our notion of accuracy into account.  We formalize this
``slack'' by a parameter $\xi$, and given $\xi$, we introduce the
following new notions. We call an item $d$ with frequency $f_d \geq
\tau + \xi$ a \emph{very-frequent item} and an item $d$ with frequency
$f_d \leq \tau - \xi$ a \emph{very-infrequent item}. We will measure
the inaccuracy of an algorithm then only using its inability to retain
the very-frequent items and its inability to filter out the very
infrequent items.
%
%
%
%
\extraspacing
\begin{definition} [$(\mathcal{A}, S)$-inaccuracy] \label{def:acc} \em
  Given an algorithm $\mathcal{A}$ and an input search log $S$, the $(\mathcal{A},
  S)$-inaccuracy with slack $\xi$ is defined as
\begin{align*}
E[ | & \{ d \in \mathcal{A}(S) | f_d(S) < \tau - \xi \} \cup \\
	 & \{ d \not \in \mathcal{A}(S) | f_d(S) > \tau + \xi \}  |]
\end{align*}
\end{definition}
\extraspacing
The expectation is taken over the randomness of the algorithm.  As an
example, consider the simple algorithm that always outputs the empty
set; we call this algorithm the \emph{baseline algorithm}. On input
$S$ the Baseline Algorithm has an inaccuracy equal to the number of
items with frequency at least $\tau + \xi$.

For the results in the next sections it will be useful to distinguish
the error of an algorithm on the very-frequent items and its error on
the very-infrequent items.  We can rewrite the inaccuracy as:
\begin{align*}
	  \hspace{-0.3cm}   \sum_{d: f_d(S)>\tau+\xi}
          \hspace{-0.5cm}1-\Pr[d \in \mathcal{A}(S)] 	
          + \hspace{-0.8cm}\sum_{d  \in D: f_d(S)<\tau-\xi}
          \hspace{-0.8cm}\Pr[d \in \mathcal{A}(S)]
\end{align*}
Thus, the $(\mathcal{A}, S)$-inaccuracy with slack $\xi$ can be
rewritten as the inability to retain the very-frequent items plus the
inability to filter out the very-infrequent items.  For example, the
baseline algorithm has an inaccuracy to filter of 0 inaccuracy to retain equal to the number of very-frequent items.

\extraspacing
\begin{definition} [$c$--accuracy] \label{def:accuracy_frequent} \em
  An algorithm $\mathcal{A}$ is $c$--accurate 
  if for any input search log $S$ and any very-frequent item $d$ in
  $S$, the probability that $\mathcal{A}$ outputs $d$ is at least $c$.
\end{definition}

\subsubsection{Experimental Utility Measures} \label{sec:expUtilM}

Traditionally, the utility of a privacy-preserving algorithm has been
evaluated by comparing some statistics of the input with the output to
see ``how much information is lost.'' The choice of suitable
statistics is a difficult problem as these statistics need to mirror
the sufficient statistics of applications that will use the sanitized
search log, and for some applications the sufficient statistics are
hard to characterize. To avoid this drawback, Brickell et
al.~\cite{BrickellS08:DataMiningUtility} measure the utility with
respect to data mining tasks and they take the actual classification
error of an induced classifier as their utility metric.

In this paper we take a similar approach. We use two real applications
from the information retrieval community: Index caching, as a
representative application for search performance, and query
substitution, as a representative application for search quality.  For
both application the sufficient statistics are histograms of
keywords, queries, or query pairs.

\extraspacing
\noindent\textbf{Index Caching.}
Search engines maintain an inverted index which, in its simplest
instantiation, contains for each keyword a posting list of identifiers
of the documents in which the keyword appears.  This index can be used
to answer search queries, but also to classify queries for choosing
sponsored search results.  The index is usually too large to fit in
memory, but maintaining a part of it in memory reduces response time
for all these applications. We use the formulation of the \emph{index
  caching problem} from Baeza--Yates~\cite{Baeza04}. We are given a
keyword search workload, a distribution over keywords indicating the
likelihood of a keyword appearing in a search query. It is our goal to
cache in memory a set of
posting lists that for a given workload maximizes the
cache-hit-probability while not exceeding the
storage capacity.  Here the hit-probability is the probability that
the posting list of a keyword can be found in memory given the keyword search workload.
%

\extraspacing
\noindent\textbf{Query Substitution.}
Query substitutions are suggestions to rephrase a user query to match
it to documents or advertisements that do not contain the actual
keywords of the query.
Query substitutions can be applied in query refinement, sponsored
search, and spelling error
correction~\cite{JonesRMG06:QuerySubstitution}.  Algorithms for query
substitution examine query pairs to learn how users re-phrase
queries. We use an algorithm developed by Jones et
al.~\cite{JonesRMG06:QuerySubstitution}.



\section{Negative Results}
\label{sec:negative}

In this section, we discuss the deficiency of two existing models of disclosure limitations for search log publication. Section~\ref{sec:anon} focuses on $k$-anonymity, and Section~\ref{sec:diff_privacy} investigates differential privacy.

\subsection{Insufficiency of Anonymity}
\label{sec:anon}

$k$-anonymity and its variants prevent an attacker from uniquely identifying the user that corresponds to a search history in the sanitized search log. While it offers great utility even beyond releasing frequent items its disclosure guarantee might not be satisfactory.  Even without unique identification of a user, an attacker can infer the keywords or queries used by the user. $k$-anonymity does not protect against this severe information disclosure.

There is another issue largely overlooked with the current implementations of anonymity. That is instead of guaranteeing that the keywords/queries/sessions of $k$ \emph{individuals} are indistinguishable in a search log they only assure that the keywords/queries/sessions associated with $k$ different \emph{user-IDs} are indistinguishable. 
These two guarantees are not the same since individuals can have multiple accounts or share accounts. 
An attacker can exploit this by creating multiple accounts and submitting the same fake queries from these accounts. It can happen that in a $k$-keyword/query/session-anonymous search log the  keywords/queries/sessions of a user  are only indistinguishable from $k-1$ fake keywords/queries/sessions submitted by an attacker. It is doubtful that this type of indistinguishability at the level of user-IDs is satisfactory.

\subsection{Impossibility of Differential Privacy}
\label{sec:privacy}\label{sec:diff_privacy}

In the following, we illustrate the infeasibility of differential
privacy in search log publication. In particular, we show
that, under realistic settings, no differentially private algorithm can
produce a sanitized search log with reasonable utility (utility is measured as defined in Section~\ref{sec:utility_def} using our notion of accuracy). 
Our analysis is based on
the following lemma.

\extraspacing
\begin{lemma}\label{thm:inacc_per_item} \em
  For a set of $U$ users, let $S$ and $S'$ be two search logs each containing
  at most $m$ items from some domain $D$  per user.
  Let $\mathcal{A}$ be
  an $\epsilon$-differentially private algorithm that, given $S$,
  retains a very-frequent item $d$ in $S$ with probability $p$. Then,
  given $S'$, $\mathcal{A}$ retains $d$ with probability at least $p /
  (e^{L_1(S, S')\cdot\epsilon/m})$, where $L_1(S, S') = \sum_{d \in D}
  |f_d(S) - f_d(S')|$ denotes the $L_1$ distance between $S$ and $S'$.
\end{lemma}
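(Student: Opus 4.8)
The plan is to prove this by an iterated (``group privacy'') application of differential privacy along a short chain of intermediate search logs. Suppose we can exhibit a chain $S = S_0, S_1, \dots, S_t = S'$ in which every consecutive pair $S_{j-1}, S_j$ is neighboring (the two differ in the history of a single user). Applying Definition~\ref{def:ediff} to the pair $(S_{j-1}, S_j)$ gives $\Pr[\mathcal{A}(S_j) = O] \ge e^{-\epsilon}\,\Pr[\mathcal{A}(S_{j-1}) = O]$ for every output search log $O$; summing this over all outputs $O$ that contain the item $d$ yields $\Pr[d \in \mathcal{A}(S_j)] \ge e^{-\epsilon}\,\Pr[d \in \mathcal{A}(S_{j-1})]$. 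Composing the $t$ resulting inequalities, $\Pr[d \in \mathcal{A}(S')] \ge e^{-t\epsilon}\,\Pr[d \in \mathcal{A}(S)] = e^{-t\epsilon}\, p$, so the lemma follows the moment we produce such a chain with $t \le L_1(S,S')/m$, since then $e^{-t\epsilon} \ge e^{-L_1(S,S')\epsilon/m}$.

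It therefore remains to bound by $L_1(S,S')/m$ the number of single-user edits needed to rewrite $S$ into $S'$, and this is exactly where the hypothesis that each user contributes at most $m$ items is used. The idea is to process the difference between the two histograms $(f_d)_{d\in D}$ in batches of size $m$: each batch consists of decrementing up to $m$ ``surplus'' counts and incrementing up to $m$ ``deficit'' counts relative to $S'$, and such a batch can be realized by editing a single user's record --- legitimate precisely because a user's record has room for at most $m$ items in both $S$ and $S'$. After at most $\lceil L_1(S,S')/m\rceil$ such edits the current log has the same histogram as $S'$ (and any remaining record that differs from $S'$ only by a relabeling can be reconciled without changing any $f_d$, hence at no cost to the count). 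Substituting $t \le L_1(S,S')/m$ into the chaining bound above gives $\Pr[d \in \mathcal{A}(S')] \ge p / e^{L_1(S,S')\epsilon/m}$, as claimed.

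The iterated use of $\epsilon$-differential privacy is routine; the step I expect to be the main obstacle is making the counting argument of the second paragraph fully precise --- showing that $S$ can be transformed into $S'$ using at most $L_1(S,S')/m$ single-user edits. The delicate points are that a single edit can simultaneously remove and insert items, that no edit may be ``wasted'' on histogram coordinates that are already correct, and that one never needs an intermediate record holding more than $m$ items. This is most transparent in the case that matters for the impossibility theorem, where $S'$ is obtained from $S$ by replacing or adding whole user histories of $m$ items each, so that every single-user edit accounts for exactly $m$ units of $L_1$-distance; the general statement is handled by the same batching idea.
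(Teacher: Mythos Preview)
The paper gives no proof beyond the remark that the lemma ``follows directly from the definition of $\epsilon$-differential privacy''; your group-privacy chaining is exactly how one unpacks that remark, and your first paragraph is correct.

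The gap is the relabeling step. You write that once the histograms agree, ``any remaining record that differs from $S'$ only by a relabeling can be reconciled without changing any $f_d$, hence at no cost to the count.'' But each such reconciliation \emph{is} a single-user edit and costs a full factor $e^{\epsilon}$ in the chain --- it is not free. Concretely, take $m=2$ and two users, with $S=(\{a,b\},\{c,d\})$ and $S'=(\{a,c\},\{b,d\})$: the histograms coincide, so $L_1(S,S')=0$, yet $S\neq S'$ and no chain of length zero connects them (you need two edits). This shows that $t\le L_1(S,S')/m$ fails in general, and with it the lemma as literally stated for an arbitrary $\epsilon$-differentially private $\mathcal{A}$ (which need not factor through the histogram).

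What your chaining argument does establish is $\Pr[d\in\mathcal{A}(S')]\ge p/e^{t\epsilon}$, where $t$ is the actual number of single-user edits needed to turn $S$ into $S'$. You already observe that in the only place the lemma is used (Theorem~\ref{thm:inacc}) one \emph{constructs} $S'_d$ from $S$ by an explicit short sequence of edits, so there the edit count is controlled directly and the paper's $L_1/m$ is effectively shorthand for it. Your argument is sound in that setting; it is only the attempted reduction of the general case via histogram matching that does not go through.
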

\extraspacing

Lemma~\ref{thm:inacc_per_item} follows directly from the definition of
$\epsilon$-differential privacy. Based on
Lemma~\ref{thm:inacc_per_item}, we have the following theorem, which
shows that any $\epsilon$-differentially private algorithm that is
accurate for very-frequent items must be inaccurate for
very-infrequent items. The rationale is that, if given a search log
$S$, an algorithm outputs one very-frequent item $d$ in $S$, then even
if the input to the algorithm is a search log where $d$ is
very-infrequent, the algorithm should still output $d$ with a certain
probability; otherwise, the algorithm cannot be differentially
private.

\extraspacing
\begin{theorem}\label{thm:inacc} \em
  Consider an accuracy constant $c$, a threshold $\tau$, a slack $\xi$
  and a very large domain $\mathcal{D} $ of size $\geq U m
  \left(\frac{2 e^{2\epsilon(\tau+\xi)/m}}{c(\tau+\xi)}
    +\frac{1}{\tau-\xi+1}\right)$, where $m$ denotes the maximum
  number of items that a user may have in a search log. Let
  $\mathcal{A}$ be an $\epsilon$-differentially private algorithm that
  is $c$-accurate (according to Definition
  \ref{def:accuracy_frequent}) for the very-frequent items. Then, for
  any input search log, the inaccuracy of $\mathcal{A}$ is greater
  than the inaccuracy of an algorithm that always outputs an empty
  set.
\end{theorem}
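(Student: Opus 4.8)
The plan is to fix an \emph{arbitrary} input search log $S$ over the given set of $U$ users (each with at most $m$ items from $\mathcal{D}$), and to show that already the ``inability to filter'' part of $(\mathcal{A},S)$-inaccuracy exceeds the baseline algorithm's inaccuracy. Using the rewriting of inaccuracy given above and dropping the non-negative ``inability to retain'' term,
\[
(\mathcal{A},S)\text{-inaccuracy} \;\geq\; \sum_{d:\, f_d(S) < \tau-\xi} \Pr[d \in \mathcal{A}(S)].
\]
On the other side, the baseline's inaccuracy on $S$ is the number of very-frequent items of $S$; since $\sum_d f_d(S) \leq Um$ and each very-frequent item has frequency at least $\tau+\xi$, this is at most $Um/(\tau+\xi)$. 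So it suffices to prove the displayed sum is at least $2Um/(\tau+\xi)$, which is strictly larger.

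The heart of the argument is a lower bound on $\Pr[d' \in \mathcal{A}(S)]$ valid for \emph{every} very-infrequent item $d'$ of $S$. Given such a $d'$, I would build a search log $S'$ on the same $U$ users (still with at most $m$ items each) by inserting $d'$ into $(\tau+\xi)-f_{d'}(S)$ users that currently lack it, evicting one existing item from a chosen user only when that user already holds $m$ items. Then $f_{d'}(S')=\tau+\xi$, so $d'$ is very-frequent in $S'$; the only histogram entries that move are $f_{d'}$ (up by $(\tau+\xi)-f_{d'}(S)$) and the evicted items (down by at most that many in total), so $L_1(S,S') \leq 2\bigl((\tau+\xi)-f_{d'}(S)\bigr) \leq 2(\tau+\xi)$. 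Applying $c$-accuracy (Definition~\ref{def:accuracy_frequent}) to the input $S'$ gives $\Pr[d'\in\mathcal{A}(S')] \geq c$, and then Lemma~\ref{thm:inacc_per_item}, with $S'$ playing the role of its ``$S$'' and the original $S$ playing the role of its ``$S'$'', yields
\[
\Pr[d'\in\mathcal{A}(S)] \;\geq\; \frac{c}{e^{\,L_1(S,S')\,\epsilon/m}} \;\geq\; c\, e^{-2\epsilon(\tau+\xi)/m}.
\]

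To conclude, I would count very-infrequent items: an item that is \emph{not} very-infrequent has (integer) frequency at least $\tau-\xi+1$, so there are at most $Um/(\tau-\xi+1)$ of these and hence at least $|\mathcal{D}| - Um/(\tau-\xi+1)$ very-infrequent items. Therefore
\[
(\mathcal{A},S)\text{-inaccuracy} \;\geq\; \Bigl(|\mathcal{D}| - \tfrac{Um}{\tau-\xi+1}\Bigr)\, c\, e^{-2\epsilon(\tau+\xi)/m},
\]
and substituting the hypothesis on $|\mathcal{D}|$ makes the parenthesized factor at least $\tfrac{2Um\,e^{2\epsilon(\tau+\xi)/m}}{c(\tau+\xi)}$, so the inaccuracy is at least $2Um/(\tau+\xi)$, strictly exceeding the baseline bound $Um/(\tau+\xi)$. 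The factor $2$ in the required domain size is exactly what turns ``$\geq$'' into the strict ``$>$'' claimed.

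I expect the modification argument of the middle paragraph to be the delicate step: one must keep $S'$ a legal search log (the fixed user count $U$ and the per-user cap $m$ are precisely the hypotheses Lemma~\ref{thm:inacc_per_item} needs), check that promoting a very-infrequent item to very-frequent costs only $L_1 = O(\tau+\xi)$ rather than something larger, and note the implicit assumption $U \geq \tau+\xi$ (without which no item can ever be very-frequent and the statement degenerates). The remainder is routine counting over the two sums and arithmetic with the domain-size bound.
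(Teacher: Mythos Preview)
Your proposal is correct and follows essentially the same route as the paper: construct, for each very-infrequent item $d$, a neighboring log $S'$ with $f_d(S')\ge\tau+\xi$ and $L_1(S,S')\le 2(\tau+\xi)$, apply $c$-accuracy plus Lemma~\ref{thm:inacc_per_item} to get $\Pr[d\in\mathcal{A}(S)]\ge c\,e^{-2\epsilon(\tau+\xi)/m}$, then sum over the at least $|\mathcal{D}|-Um/(\tau-\xi+1)$ very-infrequent items and compare against the baseline bound $Um/(\tau+\xi)$. Your modification argument is in fact slightly more careful than the paper's (which simply says ``change $\tau+\xi$ items to $d$''), and your remark about the implicit assumption $U\ge\tau+\xi$ is a valid observation that the paper leaves tacit.
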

\begin{proof}
Consider an $\epsilon$-differentially private algorithm $\mathcal{A'}$ that is $c$-accurate for the very-frequent items. Fix some input $S$. We are going to show that for each very-infrequent item $d$ in $S$ the probability of outputting $d$ is at least $c/(e^{\epsilon(\tau+\xi)/m})$.
For each item $d \in \mathcal{D}$ construct $S_d'$ from $S$ by changing $\tau+\xi$ of the items to $d$. That way  $d$ is very-frequent (with frequency at least $\tau+\xi$) and $L_1(S,S_d')\leq 2(\tau+\xi)$. By Definition~\ref{def:accuracy_frequent}, we have that
\[
\Pr[d \in \mathcal{A'}(S_d')] 	\geq c.
\]
By Lemma~\ref{thm:inacc_per_item} it follows that the probability of outputting $d$ is at least  $c/(e^{2\epsilon(\tau+\xi)/m})$ for any input database.
This means that we can compute a lower bound on the inability to filter out the very-infrequent items in $S$ by  summing up this probability over all possible values $d \in \mathcal{D}$ that are very-infrequent in $S$. Note, that there are at least $\mathcal{D} - \frac{U m}{\tau-\xi+1}$ many very-infrequent items in $S$.
Therefore, the inability to filter out the very-infrequent items is at least $\left(|\mathcal{D}|-\frac{U m}{\tau-\xi+1}\right)c/(e^{2\epsilon(\tau+\xi)/m})$.
For large domains of size at least $ U m \left(\frac{2 e^{2\epsilon(\tau+\xi)/m}}{c(\tau+\xi)} +\frac{1}{\tau-\xi+1}\right)$ the inaccuracy is at least $\frac{2 U m}{\tau+\xi}$ which is greater than the inaccuracy of the baseline.
\end{proof}
\extraspacing


To illustrate Theorem~\ref{thm:inacc}, let us consider a search log $S$ where each query contains at most 3 keywords selected from a limited vocabulary of 900,000 words. Let $D$ be the domain of the consecutive query pairs in $S$. We have $|D| = 5.3 \times 10^{35}$. Consider the following setting of the parameters $\tau+\xi = 50, m = 10, U = 1\mathord{,}000\mathord{,}000, \epsilon =1$, that is typical practice. By Theorem~\ref{thm:inacc}, if an $\epsilon$-differentially private algorithm $\mathcal{A}$ is $0.01$-accurate for very-frequent query pairs, then, in terms of overall inaccuracy (for both very-frequent and very-infrequent query pairs), $\mathcal{A}$ must be inferior to an algorithm that always outputs an empty set. In other words, no differentially private algorithm can be accurate for both very-frequent and very-infrequent query pairs.

\section{Achieving Privacy}\label{sec:alg}
\label{sec:prob_privacy}

In this section, we introduce a search log publishing algorithm called ZEALOUS that has been
independently developed by Korolova et al.\ \cite{KorolovaKMN09:PrivateQueries} and
us~\cite{GoetzMWXG09:searchlogsV2}. ZEALOUS ensures probabilistic differential privacy, and it follows a simple two-phase framework. In the first phase, ZEALOUS generates a histogram of items in the input search log, and then removes from the histogram the items with frequencies below a threshold. In the second phase, ZEALOUS adds noise to the histogram counts, and eliminates the items whose noisy frequencies are smaller than another threshold. The resulting histogram (referred to as the {\em sanitized} histogram) is then returned as the output. Figure~\ref{fig:alg} depicts the steps of ZEALOUS.


\extraspacing
\noindent\textbf{Algorithm ZEALOUS} for Publishing Frequent Items of a Search Log \\
\noindent\textbf{Input:}  Search log $S$, positive numbers $m$, $\lambda$, $\tau$, $\tau'$
\begin{enumerate}
\item[1.] 
 For each user $u$ select a set $s_u$ of  up to $m$ distinct items from  $u$'s search history in $S$.\footnote{These items can be selected in various ways as long as the selection criteria is not based on the data. Random selection is one candidate.}
\item[2.] Based on the selected items, create a histogram consisting of pairs $(k, c_k)$, where $k$ denotes an item and $c_k$ denotes the number of users $u$ that have $k$ in their search history $s_u$. We call this histogram the \emph{original} histogram.
\item[3.] Delete from the histogram the pairs $(k, c_k)$ with count $c_k$ smaller than $\tau$.
\item[4.] For each pair $(k,c_k)$ in the histogram, sample a random number $\eta_k$ from the Laplace distribution
   Lap$(\lambda)$\footnote{The Laplace distribution with scale
     parameter $\lambda$ has the probability density
     function $\frac{1}{2\lambda}e^{-\frac{|x|}{\lambda}}$.}, and add $\eta_k$
   to the count $c_k$, resulting in a noisy count: $\tilde{c}_k \gets c_k + \eta_k $.
\item[5.] Delete from the histogram the pairs $(k, \tilde{c}_k)$ with noisy counts $\tilde{c}_k \leq \tau'$.
\item[6.] Publish the remaining items and their noisy counts.
\end{enumerate}
\extraspacing


To understand the purpose of the various steps one has to keep in mind the privacy guarantee we would like to achieve.
Step 1., 2.~and 4.~of the algorithm are fairly standard. It is known that adding Laplacian noise to histogram counts achieves $\epsilon$-differential privacy~\cite{dwork06Calibrating}. However, the previous section explained that these steps alone result in poor utility because for large domains many infrequent items will have high noisy counts.
To deal better with large domains we restrict the histogram to items with counts at least $\tau$ in Step 2. This restriction leaks information and thus the output after Step 4.~is not $\epsilon$-differentially private. One can show that it is not even $(\epsilon, \delta)$--probabilistic differentially private (for $\delta<1/2$). Step 5.~disguises the information leaked in Step 3. in order to achieve probabilistic differential privacy.

\begin{figure}[t]
  \begin{center}
    \resizebox{!}{6cm}{\input{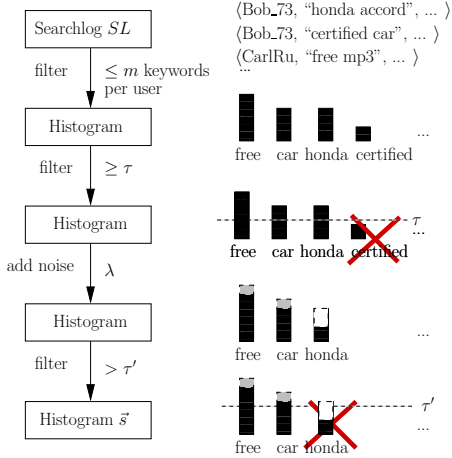}}%
  \end{center}
  \caption{ \label{fig:alg} Privacy--Preserving Algorithm.}
\end{figure}



In what follows, we will investigate the theoretical performance of ZEALOUS in terms of both privacy and utility. Section \ref{sec:alg:indist} and Section \ref{sec:alg:prob} discuss the privacy guarantees of ZEALOUS with respect to $(\epsilon, \delta)$-indistinguishability and $(\epsilon, \delta)$-probabilistic differential privacy, respectively. Section~\ref{sec:quant_comp} presents a quantitative analysis of the privacy protection offered by ZEALOUS. Sections \ref{sec:utility} and \ref{sec:separation} analyze the utility guarantees of ZEALOUS.

\subsection{Indistinguishability Analysis} \label{sec:alg:indist}

Theorem~\ref{thm:alg_san_indist} states how the parameters of ZEALOUS can be set to obtain a sanitized histogram that provides $(\epsilon',\delta')$-indistinguishability.

\extraspacing
\begin{theorem} \label{thm:alg_san_indist} \cite{KorolovaKMN09:PrivateQueries} \em
Given a search log $S$ and positive numbers $m$, $\tau$,
$\tau'$, and $\lambda$, ZEALOUS achieves $(\epsilon',\delta')$-indistinguishability, if
\begin{align}
\lambda &\geq 2m / \epsilon', \textrm{and} \label{eqn:zealous-lambda1} \\ 
\tau &= 1, \textrm{and} \label{eqn:zealous-tau1} \\
\tau' &\geq m \left(1- \frac{\log(\frac{2\delta'}{m})}{\epsilon'}\right).
\end{align}
\end{theorem}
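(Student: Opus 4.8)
The plan is to show, for any two neighboring search logs $S$ and $S'$, that apart from an event of probability at most $\delta'$ under $\mathcal{A}(S)$ the output distributions of $\mathcal{A}(S)$ and $\mathcal{A}(S')$ are within a multiplicative factor $e^{\epsilon'}$, and then to read off $(\epsilon',\delta')$-indistinguishability. First I would unwind the algorithm under the stated parameters. Since $\tau=1$, Step~3 removes only items with count $0$, so the histogram entering Step~4 is exactly the set of items occurring in some user's selected set, each with its true count. Fix $S$ and $S'$ differing only in the selected set of one user $u$, which is $s_u$ in $S$ and $s'_u$ in $S'$ with $|s_u|,|s'_u|\le m$ (here I use that Step~1 picks distinct items in a data-independent way). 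Partition the items into \emph{common} items (occurring in both logs) and \emph{exclusive} items; among the common items at most $2m$ have a count that differs between $S$ and $S'$, and each such difference is exactly $1$, while the $S$-exclusive items (those in $s_u\setminus s'_u$) have count exactly $1$ in $S$ and number at most $m$ (symmetrically for $S'$).

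For the $\delta'$ part, let $G$ be the event that $\mathcal{A}(S)$ releases no $S$-exclusive item. An $S$-exclusive item $d$ is released only if $c_d(S)+\eta_d>\tau'$, and since $c_d(S)\le m$ this forces $\eta_d>\tau'-m$, an event of probability $\frac{1}{2}e^{-(\tau'-m)/\lambda}$ for a $\mathrm{Lap}(\lambda)$ variable. A union bound over the at most $m$ such items gives $\Pr[\neg G]\le \frac{m}{2}e^{-(\tau'-m)/\lambda}$; imposing that this be at most $\delta'$ and substituting the noise scale $\lambda$ rearranges into the lower bound on $\tau'$ in the statement. The point of $G$ is that an $S$-exclusive item can never appear in the output of $\mathcal{A}(S')$, so outside $G$ the output literally testifies to the presence of $u$ and cannot be matched multiplicatively — it is precisely the part that must be charged to $\delta'$.

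For the $e^{\epsilon'}$ part, consider an output $\omega$ consistent with $G$, i.e.\ whose item set consists only of common items. Both $\Pr[\mathcal{A}(S)=\omega]$ and $\Pr[\mathcal{A}(S')=\omega]$ factor as a product over items of independent contributions — a Laplace density for each released item and a Laplace tail probability for each withheld item — and the corresponding factors are identical for every item whose count agrees in $S$ and $S'$. For the at most $2m$ common items whose counts differ by $1$, each factor (density or tail) changes by at most $e^{1/\lambda}$, and since each such item contributes exactly one factor (either released or withheld, not both) the total change is at most $e^{2m/\lambda}\le e^{\epsilon'}$ by the choice $\lambda\ge 2m/\epsilon'$; the $S'$-exclusive items appear only in the factorization of $\Pr[\mathcal{A}(S')=\omega]$ and contribute a factor at least $1-\delta'$, which I would fold into the slack. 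Summing over $\omega\in\mathcal{O}$ consistent with $G$ and adding $\Pr[\neg G]$ yields $\Pr[\mathcal{A}(S)\in\mathcal{O}]\le e^{\epsilon'}\Pr[\mathcal{A}(S')\in\mathcal{O}]+\delta'$ for every output set $\mathcal{O}$, which is the claim.

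The step I expect to be the main obstacle is the bookkeeping in the third paragraph: the restriction to $G$ must be used only on the $\mathcal{A}(S)$ side, so one has to verify carefully that summing the pointwise bound $\Pr[\mathcal{A}(S)=\omega]\le e^{\epsilon'}\Pr[\mathcal{A}(S')=\omega]$ over the $G$-consistent $\omega$ stays below $e^{\epsilon'}\Pr[\mathcal{A}(S')\in\mathcal{O}]$ despite $\mathcal{A}(S')$ being able to release $S'$-exclusive items that $\mathcal{A}(S)$ cannot; and getting the constant $2m$ rather than something larger in the exponent relies on the observation that a common item with a shifted count is penalized by $e^{1/\lambda}$ once, not once as a release factor and again as a withhold factor. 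The remaining work — the exact algebra turning $\frac{m}{2}e^{-(\tau'-m)/\lambda}\le\delta'$ into the displayed inequality for $\tau'$, and checking that the $1-\delta'$ correction is harmless — is routine.
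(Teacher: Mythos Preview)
The paper does not supply its own proof of this theorem; it is quoted from Korolova et al.\ and stated without argument, so there is no in-paper proof to compare against directly. Your outline is nonetheless the natural one, and it parallels closely the argument the paper \emph{does} give in Appendix~\ref{app:analysis} for the related Theorem~\ref{thm:alg_san}: isolate the items whose counts differ between $S$ and $S'$, charge the release of items exclusive to $S$ to the additive $\delta'$, and bound the per-item likelihood ratio by $e^{1/\lambda}$ on the remaining outputs so that the at most $2m$ differing items give $e^{2m/\lambda}\le e^{\epsilon'}$.

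The loose end you yourself flag is genuine. An $S'$-exclusive item (count $1$ in $S'$, count $0$ in $S$) is absent from $S$'s histogram but present in $S'$'s, so it contributes a factor $1$ to $\Pr[\mathcal{A}(S)=\omega]$ and a factor $\Pr[1+\eta\le\tau']<1$ to $\Pr[\mathcal{A}(S')=\omega]$; this inflates the ratio by $\bigl(1-\tfrac12 e^{-(\tau'-1)/\lambda}\bigr)^{-1}$ per such item. ``Folding into the slack'' does not cleanly recover $(\epsilon',\delta')$: you end up with $e^{\epsilon'}/(1-\delta')$ in place of $e^{\epsilon'}$, which is a slightly different guarantee. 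In the appendix proof of Theorem~\ref{thm:alg_san} this is exactly the case handled by the extra hypothesis $\tau'-\tau\ge -\lambda\ln(2-2e^{-1/\lambda})$, which forces that per-item factor below $e^{1/\lambda}$ so it can be absorbed into the $2m$ budget; Theorem~\ref{thm:alg_san_indist} as stated carries no such hypothesis, so a fully rigorous derivation from the displayed conditions alone needs either to recover that inequality from the stated bound on $\tau'$ or to appeal to Korolova et al.'s original argument. Separately, your use of $c_d(S)\le m$ for $S$-exclusive items is loose (such items have count exactly $1$), and with $\lambda=2m/\epsilon'$ your union-bound algebra yields $\tau'\ge m(1-2\log(2\delta'/m)/\epsilon')$, a factor of two off from the displayed constant --- so the ``routine'' rearrangement you defer does not actually land on the stated threshold.
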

\extraspacing

To publish not only frequent queries but also their clicks, Korolova et al.~\cite{KorolovaKMN09:PrivateQueries} suggest to first determine the frequent queries and then publish noisy counts of the clicks to their top-100 ranked documents. In particular, if we use ZEALOUS to publish frequent queries in a manner that achieves $(\epsilon',\delta')$-indistinguishability, we can also publish the noisy click distributions of the top-100 ranked documents for each of the frequent queries, by simply adding Laplacian noise to the click counts with scale $2m/\epsilon'$. Together the sanitized query and click histogram achieves $(2\epsilon', \delta')$-indistinguishability.

\subsection{Probabilistic Differential Privacy Analysis}\label{sec:alg:prob}

Given values for $\epsilon$, $\delta$, $\tau$ and $m$, the following theorem tells us how to set the parameters $\lambda$ and $\tau'$ to ensure that ZEALOUS achieves $(\epsilon, \delta)$-probabilistic differential privacy.

\extraspacing
\begin{theorem} \label{thm:alg_san} \em
Given a search log $S$ and positive numbers $m$, $\tau$,
$\tau'$, and $\lambda$, ZEALOUS achieves $(\epsilon, \delta)$-probabilistic
differential privacy, if
\begin{equation} \label{eqn:zealous-lambda}
\lambda \geq 2m / \epsilon, \textrm{
  and}
\end{equation}
\begin{equation} \label{eqn:zealous-tau}
\tau'-\tau \geq \max\left(-\lambda \ln\left(2-2e^{-\frac{1}{\lambda}}\right), -\lambda \ln\left(\frac{2\delta}{U \cdot m / \tau}\right)\right),
\end{equation}
where $U$ denotes the number of users in $S$.
\end{theorem}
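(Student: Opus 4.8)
The plan is to exhibit, for each input search log $S$, a partition of the output space $\Omega = \Omega_1 \cup \Omega_2$ such that the "bad" set $\Omega_2$ (the outputs that witness a differential-privacy violation between $S$ and some neighbor $S'$) has probability at most $\delta$ under $\mathcal{A}(S)$, and such that on $\Omega_1$ the ratio $\Pr[\mathcal{A}(S)=O]/\Pr[\mathcal{A}(S')=O]$ is between $e^{-\epsilon}$ and $e^{\epsilon}$ for every neighbor $S'$. The key structural observation is that the only source of a potential privacy breach is Step 3 (the hard threshold $\tau$ on the true counts): once an item survives Step 3, adding $\mathrm{Lap}(\lambda)$ noise with $\lambda \geq 2m/\epsilon$ and publishing is just the standard Laplace mechanism, whose global sensitivity here is at most $2m$ (one user affects at most $m$ item-counts, each by at most $1$, in either direction). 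So I would define $\Omega_2$ to be the set of output histograms that contain at least one item $k$ whose \emph{true} count $c_k(S)$ lies in the dangerous band — concretely, items for which $c_k(S) < \tau$ in $S$ but $c_k(S') \geq \tau$ for some neighbor $S'$ (equivalently $c_k(S) = \tau - 1$, since one user changes a count by at most $1$ — though with the per-user cap of $m$ items I should be slightly careful and allow the band $\tau - 1 \le c_k(S) \le \tau - 1$, i.e. exactly $\tau-1$), together with the mirror-image items at $c_k(S) = \tau$ whose count could drop below threshold in a neighbor. Such an item can appear in the output of $\mathcal{A}(S')$ but the event "this item appears" has probability $0$ under $\mathcal{A}(S)$ (it was deleted in Step 3), so no finite multiplicative bound can hold — these are exactly the outputs we must dump into $\Omega_2$.

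The first main step is the \textbf{sensitivity / $\Omega_1$ bound}. Fix $S$ and a neighbor $S'$, and condition on an output $O \in \Omega_1$, i.e. $O$ contains no item from the dangerous band. Every item $k$ appearing in $O$ has $c_k(S) \ge \tau$ and $c_k(S') \ge \tau$, so it survives Step 3 under \emph{both} logs, and for every item not in $O$ we need the probability of it being deleted (in Step 3 or Step 5) to match up to $e^{\pm\epsilon}$. Because each user contributes to at most $m$ item-counts and changes each by at most $1$, $\sum_k |c_k(S) - c_k(S')| \le 2m$; the product of per-coordinate Laplace density ratios is then at most $e^{2m/\lambda} \le e^{\epsilon}$. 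One has to handle the Step-5 deletion carefully: the probability that item $k$ is \emph{deleted} is $\Pr[c_k + \eta_k \le \tau']$, and since $|c_k(S) - c_k(S')| \le 1$ for each such $k$ and $\sum$ of these differences is $\le 2m$, the same Laplace-ratio argument (applied to a mixture of the "published with value $\tilde c_k$" density and the atom "deleted") gives the $e^{\pm \epsilon}$ bound jointly over all coordinates. This step is essentially bookkeeping on top of the classical Laplace-mechanism argument, so I do not expect it to be the obstacle.

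The second main step — and the one I expect to be the crux — is \textbf{bounding $\Pr[\mathcal{A}(S) \in \Omega_2] \le \delta$}. An output lands in $\Omega_2$ only if some item $k$ with $c_k(S) = \tau - 1$ (or an analogous borderline item) is nonetheless \emph{published}; but wait — such an item is deleted in Step 3 and never reaches the output under $S$, so at first glance $\Pr[\mathcal{A}(S)\in\Omega_2]=0$. The subtlety is that $\Omega_2$ is defined relative to \emph{all} neighbors $S'$, and an item may be in the dangerous band as seen from $S$ because some neighbor pushes a count from $\tau - 1$ up to $\tau$, OR because a neighbor pushes a count from $\tau$ down to $\tau - 1$; the latter items \emph{do} have positive probability of being published under $S$. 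So the real content is: the probability that $\mathcal{A}(S)$ publishes some item whose true count in $S$ is exactly $\tau$ must be at most $\delta$. For a single such item, $\Pr[\text{published}] = \Pr[\tau + \eta_k > \tau'] = \Pr[\eta_k > \tau' - \tau] = \tfrac12 e^{-(\tau'-\tau)/\lambda}$ (using $\tau'-\tau \ge 0$); by a union bound over at most — and here is where the $U m/\tau$ factor enters — at most $Um/\tau$ items can have count $\ge \tau$ simultaneously (total mass $\sum_k c_k(S) \le Um$), so the total breach probability is at most $\tfrac{Um}{\tau}\cdot\tfrac12 e^{-(\tau'-\tau)/\lambda}$, and setting this $\le \delta$ gives exactly $\tau' - \tau \ge -\lambda\ln\!\big(\tfrac{2\delta}{Um/\tau}\big)$. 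The remaining term $-\lambda\ln(2 - 2e^{-1/\lambda})$ in \eqref{eqn:zealous-tau} comes from a separate, more delicate point: one must also ensure that for items with count exactly $\tau$ under $S$ but $\tau - 1$ under $S'$, the \emph{conditional} output distributions (given the item is published) still line up, and more importantly that the comparison in condition (2) of Definition~\ref{def:prob-diff} is symmetric — $\Pr[\mathcal{A}(S')=O]$ could be $0$ for $O$ mentioning an item at the $S'$-threshold. I would handle this by checking that the inequality $2 - 2e^{-1/\lambda} \le e^{(\tau'-\tau)/\lambda} \cdot(\text{something})$ makes the one remaining density-ratio comparison go through; concretely this term guarantees $\Pr[\eta_k \le \tau' - \tau] \ge \tfrac12$, i.e. $\tau' - \tau \ge 0$ together with a small buffer ensuring the "deleted" atom dominates appropriately. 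The writing challenge is to cleanly separate "outputs that are breaches" from "outputs on which the ratio is fine," and to make the union-bound count $Um/\tau$ rigorous; I would structure the proof as (i) define $\Omega_2$, (ii) prove the $\delta$-bound by union bound over borderline items, (iii) prove the $e^{\pm\epsilon}$-bound on $\Omega_1$ by the sensitivity argument, pointing to \cite{dwork06Calibrating} for the base Laplace fact.
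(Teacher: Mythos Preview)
Your architecture is the same as the paper's: take $\Omega_2$ to be the outputs that publish some item whose true count in $S$ is exactly $\tau$ (the paper calls this set of borderline items $K$), bound $\Pr[\mathcal{A}(S)\in\Omega_2]$ by the union bound you describe over $|K|\le Um/\tau$ items each surviving Step~5 with probability $\tfrac12 e^{-(\tau'-\tau)/\lambda}$ (this yields the second term in the max), and on $\Omega_1$ run the coordinate-wise Laplace ratio argument with at most $2m$ affected coordinates and $\lambda\ge 2m/\epsilon$.

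Your account of the \emph{first} term $-\lambda\ln(2-2e^{-1/\lambda})$ is where the proposal wobbles. It does not encode $\Pr[\eta_k\le\tau'-\tau]\ge\tfrac12$ (that would be merely $\tau'\ge\tau$), and it is not about conditional distributions ``given the item is published'' --- on $\Omega_1$ the borderline item is \emph{not} published. The relevant per-coordinate case is: the item has count $\tau-1$ on one side of the neighboring pair (hard-deleted in Step~3, so the ``not-published'' atom has mass $1$) and count $\tau$ on the other side (survives Step~3; the not-published atom has mass $1-\tfrac12 e^{-(\tau'-\tau)/\lambda}$). The ratio of these two atoms must be bounded by $e^{1/\lambda}$, i.e.
\[
\frac{1}{1-\tfrac12\, e^{-(\tau'-\tau)/\lambda}}\ \le\ e^{1/\lambda},
\]
and this rearranges exactly to $\tau'-\tau\ge -\lambda\ln(2-2e^{-1/\lambda})$. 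Without it, the product over the $\le 2m$ coordinates in $\Delta$ would not close at $e^\epsilon$. Finally, to dispose of the symmetry worry you raise (an output $O$ mentions an item with count $<\tau$ in $S$ but $\tau$ in $S'$, so $\Pr[\mathcal{A}(S)=O]=0$ while $\Pr[\mathcal{A}(S')=O]>0$), simply intersect $\Omega_1$ with the support of $\mathcal{A}(S)$; those extra outputs go into $\Omega_2$ at zero cost to the $\delta$-bound, which is what the paper does implicitly by restricting attention to ``any possible output of ZEALOUS given $S$''.
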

\extraspacing

The proof of Theorem~\ref{thm:alg_san} can be found in Appendix~\ref{app:analysis}.
%


\eat{
\begin{table*}[t]
\centering
\begin{tabular}{c }
\begin{minipage}[h]{8in}
{
\centering
\begin{tabular}{ c | l  l  l  l}

Privacy Guarantee & $\tau'=50$ & $\tau'=100$  & $\tau'=150$&  $\tau'=200$\\
\hline
$\lambda=1$ ($\epsilon, \epsilon' =10$) 	&
$ \delta = 6.6\times10^{-16}$ &
$ \delta = 1.3\times10^{-37} $ &
$ \delta = 2.5\times10^{-59}$ &
$ \delta = 4.7\times10^{-81}$\\
							&
$ \delta' = 7.2\times10^{-20} $ &
$  \delta' = 1.4\times10^{-41} $ &
$ \delta' = 2.7\times10^{-63} $ &
$ \delta' = 5.2\times10^{-85} $\\

$\lambda=5$ ($\epsilon, \epsilon' =2$) 	&
$ \delta = 1 $ &
$ \delta = 3.2\times10^{-3}$ &
$ \delta = 1.5\times10^{-7} $ &
$ \delta = 6.5\times10^{-12}$ \\
							&
$ \delta' = 3.1\times10^{-4} $ &
$  \delta' = 1.4\times10^{-8} $ &
$ \delta' = 6.4\times10^{-13} $ &
$ \delta' = 2.9\times10^{-17} $\\
\end{tabular}
}
\end{minipage}\\
\end{tabular}
\caption{ \label{table:parameters}  $(\epsilon', \delta')$-indistinguishability vs. $(\epsilon, \delta)$-probabilistic differential privacy of releasing query counts. $U = 500k$, $m = 5$.}
\end{table*}
}

\subsection{Quantitative Comparison of Prob. Diff. Privacy and Indistinguishability for ZEALOUS}\label{sec:quant_comp}

\begin{table}[t]
\centering
\begin{tabular}{c }
\begin{minipage}[h]{8in}
{
\centering
\begin{tabular}{ c | l  l}

Privacy Guarantee & $\tau'=100$  &  $\tau'=200$\\
\hline
$\lambda=1$ ($\epsilon, \epsilon' =10$) 	&
$ \delta = 1.3\times10^{-37} $ &
$ \delta = 4.7\times10^{-81}$\\
							&
$  \delta' = 1.4\times10^{-41} $ &
$ \delta' = 5.2\times10^{-85} $\\

$\lambda=5$ ($\epsilon, \epsilon' =2$) 	&
$ \delta = 3.2\times10^{-3}$ &
$ \delta = 6.5\times10^{-12}$ \\
							&
$  \delta' = 1.4\times10^{-8} $ &
$ \delta' = 2.9\times10^{-17} $\\
\end{tabular}
}
\end{minipage}\\
\end{tabular}
\caption{ \label{table:parameters}  $(\epsilon', \delta')$-indistinguishability vs. $(\epsilon, \delta)$-probabilistic differential privacy. $U = 500k$, $m = 5$.}
\end{table}

In Table~\ref{table:parameters}, we illustrate the levels of $(\epsilon', \delta')$-indistinguishability and $(\epsilon, \delta)$-probabilistic differential privacy achieved by ZEALOUS for various noise and threshold parameters. We fix the number of users to $U = 500k$, and the maximum number of items from a user to $m = 5$, which is a typical setting that will be explored in our experiments. Table~\ref{table:parameters} shows the tradeoff between utility and privacy: A larger $\lambda$ results in a greater amount of noise in the sanitized search log (i.e., decreased data utility), but it also leads to smaller $\epsilon$ and $\epsilon'$ (i.e., stronger privacy guarantee). Similarly, when $\tau'$ increases, the sanitized search log provides less utility (since fewer items are published) but a higher level of privacy protection (as $\delta$ and $\delta'$ decreases).

Interestingly, given $\lambda$ and $\tau'$, we always have $\delta > \delta'$. This is due to the fact that $(\epsilon, \delta)$-probabilistic differential privacy is a stronger privacy guarantee than $(\epsilon', \delta')$-indistinguishability, as will be discussed in Section~\ref{sec:comp}.

\subsection{Utility Analysis} \label{sec:utility}
Next, we analyze the utility guarantee of ZEALOUS in terms of its accuracy (as defined in Section~\ref{sec:utility_def}).

\extraspacing
\begin{theorem}\label{thm:acc} \em
	Given parameters $\tau = \tau^*-\xi, \tau' = \tau^*+\xi$, noise scale $\lambda$, and a search log $S$, the inaccuracy of ZEALOUS with slack $\xi$ equals
	\begin{align*}
		  \hspace{-0.3cm}\sum_{d: f_d(S)>\tau+\xi} \hspace{-0.5cm}1/2 e^{-2\xi/\lambda}
	    + \hspace{-0.8cm}\sum_{d  \in D: f_d(S)<\tau-\xi} \hspace{-0.8cm}0
	\end{align*}
	In particular, this means that ZEALOUS is $(1-1/2 e^{-\frac{\xi}{\lambda}})$-accurate for the very-frequent items (of frequency $\geq \tau^*+\xi$) and it provides perfect accuracy for the very-infrequent items (of frequency $< \tau^*-\xi$).
\end{theorem}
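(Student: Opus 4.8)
The plan is to track, item by item, the only source of randomness in ZEALOUS that affects whether a fixed item $d$ is published, namely the single Laplace draw $\eta_d$ added to the count of $d$ in Step~4. Steps~2, 3, 5 and 6 are deterministic once the histogram counts and the noise are fixed, and Step~1 only selects items; so, writing $c_d$ for the count of $d$ in the original histogram --- which equals $f_d(S)$ as soon as every user contributes at most $m$ items, so that Step~1 retains everything --- we have $\Pr[d\in\mathcal{A}(S)] = \mathbf{1}[c_d\ge\tau]\cdot\Pr_{\eta_d\sim\mathrm{Lap}(\lambda)}\!\bigl[c_d+\eta_d>\tau'\bigr]$. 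Since the inaccuracy is the expectation of a count of ``wrong'' items, linearity of expectation rewrites it, exactly as in Section~\ref{sec:utility_def}, as $\sum_{d:\,f_d(S)>\tau^*+\xi}\bigl(1-\Pr[d\in\mathcal{A}(S)]\bigr)\;+\;\sum_{d:\,f_d(S)<\tau^*-\xi}\Pr[d\in\mathcal{A}(S)]$, so it suffices to evaluate $\Pr[d\in\mathcal{A}(S)]$ on very-infrequent and on very-frequent items separately.

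For a very-infrequent item, $f_d(S)<\tau^*-\xi=\tau$, hence $c_d\le f_d(S)<\tau$ and the pair $(d,c_d)$ is deleted already in Step~3; thus $\Pr[d\in\mathcal{A}(S)]=0$. Every term of the second sum therefore vanishes, which simultaneously yields the ``$+\sum 0$'' summand and the claim that ZEALOUS never publishes a very-infrequent item (perfect accuracy there).

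For a very-frequent item, $f_d(S)>\tau^*+\xi=\tau'\ge\tau$, so $d$ survives Step~3, and it fails to be published exactly when the noise drags its count down to at most $\tau'$: $1-\Pr[d\in\mathcal{A}(S)]=\Pr[\eta_d\le\tau'-f_d(S)]$. Because $\tau'-f_d(S)<0$, the Laplace tail on the negative half-line gives $\Pr[\eta_d\le\tau'-f_d(S)]=\tfrac12\,e^{(\tau'-f_d(S))/\lambda}=\tfrac12\,e^{-(f_d(S)-\tau')/\lambda}$. Substituting the parameter relations $\tau=\tau^*-\xi$ and $\tau'=\tau^*+\xi$ turns this per-item error into the form displayed in the first sum, and reading off $\Pr[d\in\mathcal{A}(S)]=1-\tfrac12 e^{-(f_d(S)-\tau')/\lambda}$ gives the $c$-accuracy conclusion for the very-frequent items.

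There is no genuinely hard step here; the argument is just the definition of ZEALOUS together with the Laplace CDF and linearity of expectation. The two points that need care --- and that are exactly what make the expression an identity rather than a one-sided bound --- are (i) identifying the original-histogram count $c_d$ with $f_d(S)$, since otherwise Step~1 could shrink a very-frequent item's count and the equality would weaken to ``$\le$''; and (ii) keeping the two threshold conventions straight, Step~3 removing counts strictly below $\tau$ and Step~5 removing noisy counts at most $\tau'$, so that a very-infrequent item is removed with certainty while a very-frequent item is removed with probability exactly the Laplace mass at or below $\tau'-f_d(S)$.
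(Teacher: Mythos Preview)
Your approach is essentially the same as the paper's: decompose the inaccuracy by linearity of expectation into per-item terms, observe that very-infrequent items are deterministically removed at the first threshold, and evaluate the Laplace tail for very-frequent items. You are in fact more careful than the paper's proof in two respects: you explicitly flag the identification $c_d=f_d(S)$ needed for the displayed expression to be an equality rather than an upper bound, and you compute the exact per-item miss probability $\tfrac12 e^{-(f_d(S)-\tau')/\lambda}$ where the paper only records the bound $\tfrac12 e^{-\xi/\lambda}$ coming from the worst-case very-frequent count.
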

\begin{proof}
	%
	%
	It is easy to see that ZEALOUS provides perfect accuracy of filtering out infrequent items. Moreover, the probability of outputting a very-frequent item is at least
	\[
		1-1/2 e^{-\frac{\xi}{\lambda}}
	\]
	which is the probability that the $\text{Lap}(\lambda)$-distributed noise that is added to the count is at least $-\xi$ so that a very-frequent item with count at least $\tau + \xi$ remains in the output of the algorithm.
	%
	This probability is at least $1/2$.
	All in all it has higher accuracy than the baseline algorithm on all  inputs with at least one very-frequent item.
\end{proof}

\subsection{Separation Result}\label{sec:separation}

Combining the analysis in Sections \ref{sec:privacy} and \ref{sec:utility}, we obtain the following separation result between $\epsilon$-differential privacy and $(\epsilon, \delta)$- probabilistic differential privacy.

\extraspacing
\begin{theorem} [Separation Result] \label{thm:separation} \em
	 Our $(\epsilon, \delta)$- probabilistic differentially private algorithm ZEALOUS is able to retain frequent items with probability at least $1/2$ while filtering out all infrequent items. On the other hand, for any $\epsilon$-differentially private algorithm that can retain frequent items with non-zero probability (independent of the input database), its inaccuracy for large item domains is larger than an algorithm that always outputs an empty set.
\end{theorem}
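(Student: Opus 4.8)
The plan is to obtain both halves of the statement as near-immediate consequences of results already in hand, so the real content is just checking that a single parameter setting serves both purposes. For the positive half, I would run ZEALOUS with $\tau = \tau^*-\xi$ and $\tau' = \tau^*+\xi$ for a slack $\xi$ to be fixed, and with noise scale $\lambda$. Theorem~\ref{thm:acc} then tells us that with these parameters ZEALOUS filters out every very-infrequent item with certainty and retains every very-frequent item with probability $1 - \tfrac12 e^{-\xi/\lambda}$, which is at least $1/2$ since $e^{-\xi/\lambda} \le 1$. To get probabilistic differential privacy at the same time, I would invoke Theorem~\ref{thm:alg_san}: it suffices to pick $\lambda \ge 2m/\epsilon$ and then choose $\xi$ large enough that $2\xi = \tau'-\tau \ge \max\!\bigl(-\lambda\ln(2 - 2e^{-1/\lambda}),\, -\lambda\ln(2\delta\tau/(Um))\bigr)$. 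Such $\xi$ plainly exists, so this choice of $(\lambda,\tau,\tau')$ makes ZEALOUS $(\epsilon,\delta)$-probabilistically differentially private while retaining frequent items with probability $\ge 1/2$ and discarding all infrequent ones.

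For the negative half, let $\mathcal{A}$ be an arbitrary $\epsilon$-differentially private algorithm for which there is a constant $c>0$ with $\Pr[d \in \mathcal{A}(S)] \ge c$ for every search log $S$ and every very-frequent item $d$ of $S$; this is exactly what ``retains frequent items with non-zero probability independent of the input'' means, and it is precisely $c$-accuracy in the sense of Definition~\ref{def:accuracy_frequent}. Theorem~\ref{thm:inacc} then applies verbatim: as soon as the item domain $\mathcal{D}$ has size at least $U m\!\left(\tfrac{2 e^{2\epsilon(\tau+\xi)/m}}{c(\tau+\xi)} + \tfrac{1}{\tau-\xi+1}\right)$, the inaccuracy of $\mathcal{A}$ on every input strictly exceeds the inaccuracy of the algorithm that always outputs the empty set. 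Putting the two halves together yields the separation between $\epsilon$-differential privacy and $(\epsilon,\delta)$-probabilistic differential privacy.

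I do not expect a genuine obstacle here; the only thing to be careful about is the bookkeeping needed to exhibit one parameter triple $(\lambda,\tau,\tau')$ that satisfies the hypotheses of Theorem~\ref{thm:alg_san} (for privacy) and still delivers the $\ge 1/2$ retention bound of Theorem~\ref{thm:acc} (for utility). Since the retention bound $1 - \tfrac12 e^{-\xi/\lambda}\ge 1/2$ holds for all positive $\xi,\lambda$, and the privacy side only imposes lower bounds on $\lambda$ and on $\tau'-\tau$, the two requirements never conflict — any sufficiently large $\lambda$ together with a sufficiently large $\xi$ works. I would also note explicitly that ``large item domain'' in the statement is quantified exactly as in Theorem~\ref{thm:inacc}, and that the negative half uses nothing specific about ZEALOUS: it is a property of every $\epsilon$-differentially private, $c$-accurate algorithm.
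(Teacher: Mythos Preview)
Your proposal is correct and matches the paper's approach: the paper presents this theorem as an immediate combination of the utility analysis of ZEALOUS (Theorem~\ref{thm:acc}, which gives the $\ge 1/2$ retention and perfect filtering) and the impossibility result for $\epsilon$-differential privacy (Theorem~\ref{thm:inacc}), with no additional argument. Your extra bookkeeping about choosing $(\lambda,\tau,\tau')$ to satisfy Theorem~\ref{thm:alg_san} simultaneously is a reasonable explicit check that the paper leaves implicit.
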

\extraspacing

\section{Comparing Indistinguishability with Probabilistic Differential Privacy}
\label{sec:comp} \label{sec:comp:rel} \label{sec:comp:ex}

In this section we study the relationship between $(\epsilon,\delta)$-probabilistic differential privacy and  $(\epsilon', \delta')$-indistinguishability.
First we will prove that probabilistic differential privacy implies indistinguishability. Then we will show that the converse is not true. We show that there exists an algorithm that is $(\epsilon', \delta')$-indistinguishable yet blatantly non-$\epsilon$-differentially private (and also not $(\epsilon, \delta)$-probabilistic differentially private for any value of $\epsilon$ and $\delta<1$). This fact might convince a data publisher to strongly prefer an algorithm that achieves $(\epsilon,\delta)$-probabilistic differential privacy over one that is only known to achieve  $(\epsilon', \delta')$-indistinguishability. It also might convince researchers to analyze the probabilistic privacy guarantee of algorithms that are only known to be indistinguishable  as in~\cite{DworkKMMN06_OurDataOurselves} or~\cite{NissimRS07:Smooth}.


First we show that our definition implies $(\epsilon,\delta)$-indistinguishability.

\extraspacing
\begin{proposition}\label{prop:prob_implies_ind} \em
If an algorithm $\San$ is $(\epsilon, \delta)$-probabilistic differentially private then it is also $(\epsilon,\delta)$-indistinguishable.
\end{proposition}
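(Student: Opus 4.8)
The plan is to unwind both definitions and push the probabilistic-privacy guarantee through a simple union-style bound. Fix two neighboring search logs $S$ and $S'$ and an arbitrary subset $\mathcal{O}$ of the output space $\Omega$; the goal is to bound $\Pr[\mathcal{A}(S)\in\mathcal{O}]$ from above. First I would apply Definition~\ref{def:prob-diff} to the search log $S$ — the one appearing on the left-hand side of the indistinguishability inequality — to obtain a partition $\Omega=\Omega_1\cup\Omega_2$ with $\Pr[\mathcal{A}(S)\in\Omega_2]\le\delta$ and the pointwise multiplicative bound $\Pr[\mathcal{A}(S)=O]\le e^{\epsilon}\Pr[\mathcal{A}(S')=O]$ valid for every $O\in\Omega_1$.

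Next I would split $\mathcal{O}$ along this partition: $\Pr[\mathcal{A}(S)\in\mathcal{O}]=\Pr[\mathcal{A}(S)\in\mathcal{O}\cap\Omega_1]+\Pr[\mathcal{A}(S)\in\mathcal{O}\cap\Omega_2]$. The second summand is at most $\Pr[\mathcal{A}(S)\in\Omega_2]\le\delta$. For the first summand, write it as a sum (or, if $\Omega$ is continuous, an integral) of $\Pr[\mathcal{A}(S)=O]$ over $O\in\mathcal{O}\cap\Omega_1$, apply the bound $\Pr[\mathcal{A}(S)=O]\le e^{\epsilon}\Pr[\mathcal{A}(S')=O]$ term by term, and recombine to get $e^{\epsilon}\Pr[\mathcal{A}(S')\in\mathcal{O}\cap\Omega_1]\le e^{\epsilon}\Pr[\mathcal{A}(S')\in\mathcal{O}]$. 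Adding the two estimates gives $\Pr[\mathcal{A}(S)\in\mathcal{O}]\le e^{\epsilon}\Pr[\mathcal{A}(S')\in\mathcal{O}]+\delta$, which is exactly $(\epsilon,\delta)$-indistinguishability.

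The only genuine subtlety — and hence the one point to flag as the main obstacle — is that the partition $\Omega_1,\Omega_2$ furnished by probabilistic differential privacy depends on the input search log, so one must be sure to invoke it for the log that sits inside the probability being upper-bounded (here $S$), and not for $S'$. Since indistinguishability must be verified for every ordered pair of neighbors and probabilistic differential privacy supplies a partition for every input, these line up with no loss of generality. Everything else is just linearity of the underlying measure and monotonicity of probability, with $\delta$ absorbing the entire "bad" part $\Omega_2$ in a single step. I would close by remarking that this establishes the two relaxations are comparable with the very same parameters $(\epsilon,\delta)$, which is the setup for the separating example showing the converse fails.
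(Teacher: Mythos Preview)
Your proposal is correct and follows essentially the same approach as the paper's own proof: invoke the partition $\Omega_1,\Omega_2$ for $S$, split $\mathcal{O}$ accordingly, bound the $\Omega_2$-part by $\delta$ and the $\Omega_1$-part pointwise by $e^{\epsilon}\Pr[\mathcal{A}(S')=O]$, then recombine. Your observation that the partition must be taken with respect to $S$ (the log whose output probability is being bounded) is the only real care point, and you handle it correctly.
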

\extraspacing

The proof of Proposition~\ref{prop:prob_implies_ind} can be found in Appendix~\ref{app:comp_impl}. The converse of Proposition~\ref{prop:prob_implies_ind} does not hold. In particular, there exists an an algorithm that is $(\epsilon', \delta')$-indistinguishable but not $(\epsilon,\delta)$-probabilistic differentially private for any choice of $\epsilon$ and $\delta<1$, as illustrated in the following example.


\extraspacing
\begin{example}\label{ex:alg_privacy _breach}
Consider the following algorithm that takes as input a search log $S$ with search histories of $U$ users. 

\extraspacing
\noindent\textbf{Algorithm $\hat{\mathcal{A}}$} \\
\noindent\textbf{Input:}  Search log $S \in \mathcal{D}^{U}$
\begin{enumerate}
\item[1.]
Sample uniformly at random a single search history from the set of all histories excluding the first user's search history.
\item[2.]
Return this search history.
\end{enumerate}
\extraspacing 

\eat{
	\begin{algorithm}[t]
	  \caption{ $\San'$ ($S$)   \label{alg:privacy_breach}}
	  \begin{algorithmic}
		\State \Return random search history not equal to the history of the first user.
	  \end{algorithmic}
	\end{algorithm}

	\linesnumbered
	\begin{algorithm}[t!]
	\dontprintsemicolon
	  \caption{ $\San'$   \label{alg:prob_privacy}}
		\KwIn{database $S \in D^n$}
		\lFor {$i=1$ to $n-1$}
			{$S_i = \bot$\;}
		$S_0=$ sample uniform at random from $D-S_0$\;
	    \Return S\;
	\end{algorithm}
	}

	The following proposition analyzes the privacy of Algorithm $\hat{\mathcal{A}}$.

    \extraspacing
	\begin{proposition} \em
		For any finite domain of search histories $\mathcal{D}$
		Algorithm $\hat{\mathcal{A}}$ is $(\epsilon', 1/(|\mathcal{D}|-1))$-indistinguishable for all $\epsilon '>0$ on inputs from $\mathcal{D}^U$.
	\end{proposition}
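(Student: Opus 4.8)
The plan is to verify Definition~\ref{def:indist} directly by unwinding the output distribution of $\hat{\mathcal{A}}$. The first step is to note that $\hat{\mathcal{A}}(S)$ depends only on the first user's history: if $S_1$ denotes that history, then the output is uniform over $\mathcal{D}\setminus\{S_1\}$, so each history $h\neq S_1$ is returned with probability exactly $1/(|\mathcal{D}|-1)$ and $S_1$ itself with probability $0$. In particular the output space is $\Omega=\mathcal{D}$.

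Next I would split on where the neighboring logs $S$ and $S'$ differ. If they differ in the history of a user other than the first, then $S_1=S'_1$, so $\hat{\mathcal{A}}(S)$ and $\hat{\mathcal{A}}(S')$ are identically distributed, and the required inequality $\Pr[\hat{\mathcal{A}}(S)\in\mathcal{O}]\le e^{\epsilon'}\Pr[\hat{\mathcal{A}}(S')\in\mathcal{O}]+1/(|\mathcal{D}|-1)$ holds trivially since $e^{\epsilon'}\ge 1$ and the additive term is nonnegative.

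The only substantive case is when $S$ and $S'$ differ in the first user's history, say $S_1=a\neq b=S'_1$. Then for any $\mathcal{O}\subseteq\mathcal{D}$ we have $\Pr[\hat{\mathcal{A}}(S)\in\mathcal{O}]=|\mathcal{O}\setminus\{a\}|/(|\mathcal{D}|-1)$ and $\Pr[\hat{\mathcal{A}}(S')\in\mathcal{O}]=|\mathcal{O}\setminus\{b\}|/(|\mathcal{D}|-1)$. The key elementary observation is that $|\mathcal{O}\setminus\{a\}|\le|\mathcal{O}\setminus\{b\}|+1$, because $\mathcal{O}\setminus\{a\}$ and $\mathcal{O}\setminus\{b\}$ can differ only in whether they contain $a$ and $b$. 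Dividing by $|\mathcal{D}|-1$ yields $\Pr[\hat{\mathcal{A}}(S)\in\mathcal{O}]\le\Pr[\hat{\mathcal{A}}(S')\in\mathcal{O}]+1/(|\mathcal{D}|-1)\le e^{\epsilon'}\Pr[\hat{\mathcal{A}}(S')\in\mathcal{O}]+1/(|\mathcal{D}|-1)$, which is exactly the claim; since the argument only uses $e^{\epsilon'}\ge 1$, it holds for every $\epsilon'>0$.

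There is no hard step here: the proof is just the case distinction on which user differs together with the cardinality bound that lets the single-element discrepancy be absorbed into the additive $\delta=1/(|\mathcal{D}|-1)$ term. The substance of the example lies in what is established in the surrounding text rather than in this proposition — namely that this same $\hat{\mathcal{A}}$ is blatantly non-$\epsilon$-differentially private (and not $(\epsilon,\delta)$-probabilistic differentially private for $\delta<1$), because it can return a history with probability $1/(|\mathcal{D}|-1)>0$ on one log yet probability $0$ on a neighbor obtained by making that history the first user's — but that is beyond the statement being proved here.
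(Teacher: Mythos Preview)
Your proof is correct and follows essentially the same approach as the paper: both split on whether the neighbors differ in the first user's history, dispose of the trivial case by noting the output distribution is unchanged, and in the remaining case compute the uniform distribution over $\mathcal{D}\setminus\{S_1\}$ explicitly and bound the difference by the single-element discrepancy $1/(|\mathcal{D}|-1)$. Your cardinality inequality $|\mathcal{O}\setminus\{a\}|\le|\mathcal{O}\setminus\{b\}|+1$ is exactly the paper's summation bound written in slightly cleaner set-theoretic language.
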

    \extraspacing
    
	The proof can be found in Appendix~\ref{app:comp_alg}. The next proposition shows that every single output of the algorithm constitutes a privacy breach.

    \extraspacing
	\begin{proposition} \em
		For any search log $S$, the output of Algorithm $\hat{\mathcal{A}}$ constitutes a privacy breach according to $\epsilon$-differentially privacy for any value of $\epsilon$.
	\end{proposition}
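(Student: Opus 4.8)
The plan is to unwind what ``privacy breach according to $\epsilon$-differential privacy'' means and then, for every input $S$ and every output the algorithm can return, exhibit a single neighboring search log on which that output has probability zero. Reading off Definition~\ref{def:prob-diff}: for a fixed input $S$, an output $O$ must be placed in the ``bad'' set $\Omega_2$ --- i.e.\ it is a privacy breach --- as soon as there is some neighboring search log $S'$ for which the two-sided bound $e^{-\epsilon}\Pr[\hat{\mathcal{A}}(S')=O] \leq \Pr[\hat{\mathcal{A}}(S)=O] \leq e^{\epsilon}\Pr[\hat{\mathcal{A}}(S')=O]$ fails. In particular, any $O$ with $\Pr[\hat{\mathcal{A}}(S)=O]>0$ but $\Pr[\hat{\mathcal{A}}(S')=O]=0$ for some neighbor $S'$ is a privacy breach for \emph{every} value of $\epsilon$, since then no finite $e^{\epsilon}$ can satisfy the upper inequality.

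First I would record the output distribution of $\hat{\mathcal{A}}$: if $h_1$ denotes the first user's search history in $S$, then $\hat{\mathcal{A}}(S)$ is uniform over $\mathcal{D}\setminus\{h_1\}$, so its support is exactly $\mathcal{D}\setminus\{h_1\}$ and each such history receives probability $1/(|\mathcal{D}|-1)$ (we assume, as throughout, that $\mathcal{D}$ is finite with $|\mathcal{D}|\geq 2$). Now fix an arbitrary input $S$ and an arbitrary output $O\in\mathcal{D}\setminus\{h_1\}$ that $\hat{\mathcal{A}}$ can return, and let $S'$ be the search log obtained from $S$ by replacing the first user's history $h_1$ with $O$. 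Since $O\neq h_1$, the logs $S$ and $S'$ differ in the history of exactly one user, hence are neighboring, and by the description of the algorithm $\Pr[\hat{\mathcal{A}}(S')=O]=0$ because $O$ is excluded from the support of $\hat{\mathcal{A}}(S')$. Therefore $\Pr[\hat{\mathcal{A}}(S)=O]=1/(|\mathcal{D}|-1)>0=e^{\epsilon}\Pr[\hat{\mathcal{A}}(S')=O]$ for every $\epsilon$, so $O$ is a privacy breach; as $O$ and $S$ were arbitrary, this proves the proposition. (In fact the same neighbor argument, applied with a change to the first user's history, shows that \emph{every} element of the output space, not just those in the support, is a privacy breach.)

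The computations are immediate; the only points that need care are purely bookkeeping: checking that the constructed $S'$ is genuinely a neighbor of $S$ (which is where $O\neq h_1$ is used) and transcribing the notion of privacy breach from Definition~\ref{def:prob-diff} correctly, in particular handling the degenerate case where the comparison probability equals zero so that the multiplicative guarantee fails for all $\epsilon$ simultaneously. Note that only the upper inequality of the two-sided bound is needed, so the symmetric lower bound plays no role.
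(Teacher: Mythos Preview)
Your proof is correct and follows essentially the same approach as the paper: fix an output $O$ in the support of $\hat{\mathcal{A}}(S)$, replace the first user's history with $O$ to obtain a neighbor $S'$, and observe that $\Pr[\hat{\mathcal{A}}(S)=O]=1/(|\mathcal{D}|-1)>0$ while $\Pr[\hat{\mathcal{A}}(S')=O]=0$, so the multiplicative bound fails for every $\epsilon$. Your write-up is more explicit about tying this back to the $\Omega_2$ set in Definition~\ref{def:prob-diff} and about the bookkeeping that $S'$ is genuinely a neighbor, but the core argument is identical.
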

	\begin{proof}
		Fix an input $S$ and an output $O$ that is different from the search history of the first user.
		Consider the input $S'$ differing from $S$ only in the first user history, where $S'_1 = O$.
	 	Here, 	\[ 1/(|\mathcal{D}|-1) = \Pr[\San(S) = O]  \not\leq e^{\epsilon}\Pr[\San(S') = O] = 0. \]
		Thus the output $S$ breaches the privacy of the first user according to $\epsilon$-differentially privacy.
	\end{proof}
    \extraspacing
    
	\begin{corollary} \label{prop:ind_is_weak} \em
		Algorithm $\hat{\mathcal{A}}$ is $(\epsilon', 1/(|\mathcal{D}|-1))$-indistinguishable for all $\epsilon' >0$. But it is not $(\epsilon, \delta)$-probabilistic differentially private for any $\epsilon$ and any $\delta<1$.
	\end{corollary}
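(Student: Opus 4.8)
The plan is to prove Corollary~\ref{prop:ind_is_weak} by simply assembling the two propositions that immediately precede it. The indistinguishability half is exactly the statement of the first of those propositions, so nothing new is needed there. For the probabilistic-differential-privacy half, I would argue by contradiction: suppose $\hat{\mathcal A}$ were $(\epsilon,\delta)$-probabilistic differentially private for some $\epsilon$ and some $\delta<1$. By Definition~\ref{def:prob-diff}, for the fixed input $S$ there is a partition $\Omega=\Omega_1\cup\Omega_2$ with $\Pr[\hat{\mathcal A}(S)\in\Omega_2]\le\delta<1$, hence $\Pr[\hat{\mathcal A}(S)\in\Omega_1]>0$, so $\Omega_1$ is nonempty and in fact contains some output $O$ with $\Pr[\hat{\mathcal A}(S)=O]>0$ — concretely any search history $O\in\mathcal D$ distinct from the first user's history. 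But the preceding proposition shows that for this very $O$ we can pick a neighbouring log $S'$ (agreeing with $S$ except $S'_1=O$) for which $\Pr[\hat{\mathcal A}(S')=O]=0$, so the required inequality $\Pr[\hat{\mathcal A}(S)=O]\le e^{\epsilon}\Pr[\hat{\mathcal A}(S')=O]=0$ fails. That contradicts $O\in\Omega_1$, completing the argument.

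Concretely, I would write the corollary's proof as: the first clause is restated from the earlier proposition; for the second clause, fix any $\epsilon$ and any $\delta<1$, assume $(\epsilon,\delta)$-probabilistic differential privacy with witnessing partition $\Omega_1,\Omega_2$ on input $S$, note $\Pr[\hat{\mathcal A}(S)\in\Omega_1]\ge 1-\delta>0$ forces some positive-probability output $O\in\Omega_1$, invoke the previous proposition to exhibit the neighbour $S'$ with $\Pr[\hat{\mathcal A}(S')=O]=0$, and derive the contradiction with condition (2) of Definition~\ref{def:prob-diff}. One small point worth spelling out: every output of $\hat{\mathcal A}$ on input $S$ has probability exactly $1/(|\mathcal D|-1)>0$ and is distinct from user~1's history, so \emph{any} $O\in\Omega_1$ automatically has the two needed properties (positive probability under $S$, and being a valid value for $S'_1$); thus there is no delicate case analysis.

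I do not expect a genuine obstacle here, since the work has been front-loaded into the two propositions. The only thing to be careful about is the logical bookkeeping around the definition: probabilistic differential privacy quantifies the partition per input $S$, and one must make sure the chosen $O$ lies in $\Omega_1$ (not merely have positive probability) before applying condition (2). This is handled by the pigeonhole observation that $\Pr[\hat{\mathcal A}(S)\in\Omega_1]>0$ guarantees $\Omega_1$ meets the support of $\hat{\mathcal A}(S)$, and the support consists entirely of histories eligible to serve as $S'_1$. If one wanted an even cleaner statement, one could additionally observe that since the support of $\hat{\mathcal A}(S)$ has $|\mathcal D|-1$ elements and $\Pr[\hat{\mathcal A}(S)\in\Omega_2]\le\delta<1$, $\Omega_2$ misses at least one support element, but this is not needed.
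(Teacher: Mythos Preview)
Your proposal is correct and matches the paper's approach: the paper states the corollary without proof, treating it as an immediate consequence of the two preceding propositions, and you have simply spelled out the logical bookkeeping (in particular, why $\delta<1$ forces $\Omega_1$ to contain a positive-probability output to which the second proposition applies). There is nothing to add.
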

\end{example}

\extraspacing



By Corollary~\ref{prop:ind_is_weak}, an algorithm that is $(\epsilon', \delta')$-indistinguishable may not achieve any form of $(\epsilon, \delta)$-probabilistic differential privacy, even if $\delta'$ is set to an extremely small value of $1/(|\mathcal{D}|-1)$. This illustrates the significant gap between $(\epsilon', \delta')$-indistinguishable and $(\epsilon, \delta)$-probabilistic differential privacy. 
	
\eat{	
Taking $\epsilon$-differential privacy as a gold standard (that is impossible to achieve when trying to publish frequent items) we believe that  $(\epsilon, \delta)$-probabilistic differential privacy is the more intuitive guarantee and therefore more desirable. It has the strong advantage that the probability of a privacy breach is bounded. This is not the case for indistinguishability as illustrated in the example in which the probability of a privacy breach is one. Indeed we believe that search engine companies want to bound the probability of a privacy breach. As we saw in the case of Mrs.~Arnolds that leaking some pieces of sensitive information of a single user can be disastrous.

There is a way to set the parameters of indistinguishability appropriately to achieve probabilistic differential privacy. In the following we work with a slightly stronger version of indistinguishability in order to show this result. We assume that for all neighbors $S, S'$ and for all sets $\mathcal{O}:$
\begin{align*}
& \left| \ln \left(\frac{\Pr[\San(S) \in \mathcal{O}] } {\Pr[\San(S') \in \mathcal{O}]} \right) \right| \leq e^{\epsilon}
\text{or } &  \Pr[\San(S) \in \mathcal{O}] \leq \delta
\end{align*}

\begin{proposition}\label{prop:ind_implies_prob} \em
	Let $D$ denote the size of the domain of user search histories. Let $U$ denote the number of users in $S$.
	If 	Algorithm $\San$ is  $(\epsilon,\frac{\delta}{2 U\cdot D})$-strongly indistinguishable then it is $(\epsilon, \delta)$-probabilistic differentially private.
\end{proposition}
\begin{proof}[Sketch]
	We need to show that for each input search log $S$ we can partition the output space into $\mathcal{O}_1, \mathcal{O}_2$ such that the probability of outputting an element in $\mathcal{O}_2$ is bounded by $\delta$ and for all neighbors the probability of outputting an element in ${\mathcal{S}}$ is $\epsilon$-close (as in the definition of $\epsilon$-differential privacy).
	
	We will next describe how we construct the set of outputs $\mathcal{O}_2$.
	For a neighbor $S'$ we define $\mathcal{O}_{S'}^{>}$ to be the set of all $S$ for which $\Pr[\San(S) =O] >  e^{\epsilon} \Pr[\San(S') = S]$. Similarly, we define $\mathcal{O}_{S'}^{<}$ to be the set of all $S$ for which $\Pr[\San(S) =O] <  e^{-\epsilon} \Pr[\San(S') = S]$.
	
	Per definition of we have that
	\begin{align*}
		& \Pr[\San(S) \in \mathcal{O}_{S'}^{>}] >e^{\epsilon} \Pr[\San(S') \in \mathcal{O}_{S'}^{>}] \text{ and }\\
		& \Pr[\San(S) \in \mathcal{O}_{S'}^{<}] < e^{-\epsilon} \Pr[\San(S') \in \mathcal{O}_{S'}^{<}] \\
	\end{align*}
	Since $\San$ is strongly indistinguishable this implies that $\Pr[\San(S) \in \mathcal{O}_{S'}^{<}]\leq \delta$ and $\Pr[\San(S) \in \mathcal{O}_{S'}^{>}]\leq \delta$.
	We are ready to define $\mathcal{O}_2$, which is the union over all neighbors of these two sets, i.e.
	\[
	\mathcal{O}_2 = \bigcup_{S'} (\mathcal{O}_{S'}^{>}\cup \mathcal{O}_{S'}^{<})
	\]
	
	Since there are at most $U\cdot D$ many neighbors $S'$ it follows that
	\begin{align*}
	& \Pr[\San(S) \in \mathcal{O}_2] \leq U\cdot D \cdot 2\delta.
	\end{align*}
	Moreover, for any $S \in \mathcal{O} - \mathcal{O}_2$ we have that
	\[
	\left| \ln \left(\frac{\Pr[\San(S') =O]}{\Pr[\San(S) =O]} \right) \right| \leq e^{\epsilon}.
	\]
	Thus the partition into $\mathcal{O}_2$ and $\mathcal{O} - \mathcal{O}_2$ is as desired for $(\epsilon, \delta)$-probabilistic differential privacy.
	
\end{proof}
This result can be applied to take ZEALOUS and its  indistinguishability-guarantee and infer a probabilistic differential privacy guarantee. However, the parameters obtained are a lot worse than those we obtain from our analysis. Coming back to our Table~\ref{table:parameters}, consider for example the parameter settings $\lambda = 1, \tau' = 50, m=5$ which achieves $(10, 7.2\times10^{-20})$-indistinguishability.
Applying Proposition~\ref{prop:ind_implies_prob} we can only infer that this parameter setting achieves $(10,  1)$-probabilistic differential privacy even for a small domain size of 1000 keywords. Our new analysis in  Theorem~\ref{thm:alg_san} shows that this parameter setting gives rise to a much stronger privacy guarantee: $(10,  6.6\times10^{-16})$-probabilistic differential privacy.} 
\section{Choosing Parameters} \label{sec:exp_para}

\begin{figure*}[t!]
    \begin{center}
        \begin{tabular}{c c c c}  \hspace{-7mm}\epsfig{file=  ./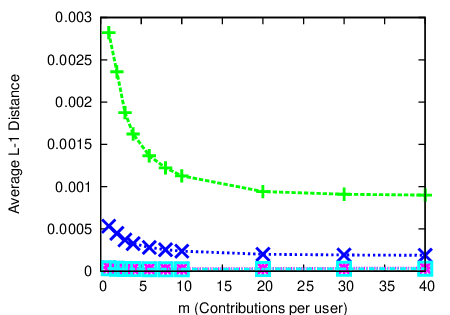,width=0.26\textwidth} & \hspace{-7mm}\epsfig{file=  ./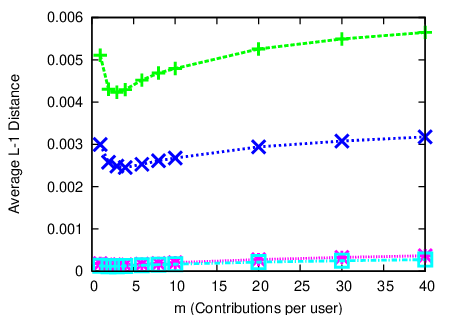,width=0.26\textwidth} & \hspace{-7mm}\epsfig{file=  ./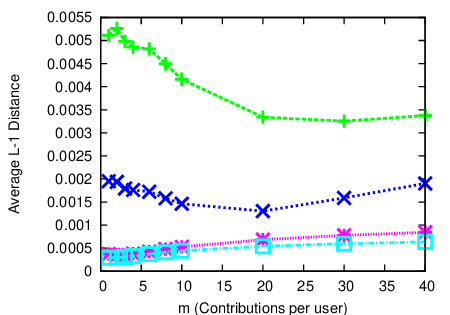,width=0.26\textwidth} & \hspace{-7mm}\epsfig{file=  ./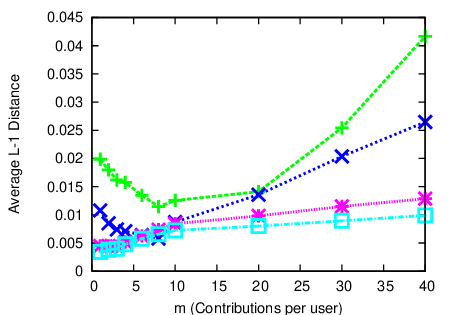,width=0.26\textwidth} \\
        \hspace{-7mm}\epsfig{file=  ./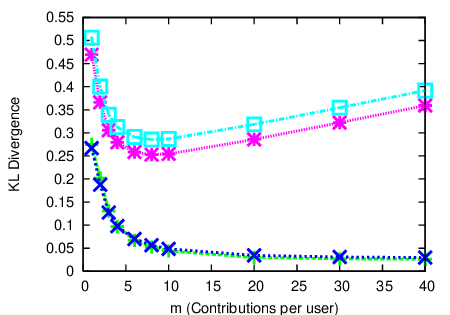,width=0.26\textwidth} & \hspace{-7mm}\epsfig{file=  ./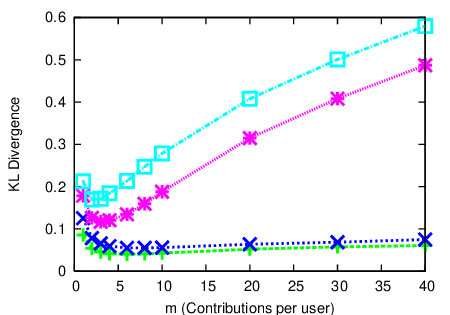,width=0.26\textwidth} & \hspace{-7mm}\epsfig{file=  ./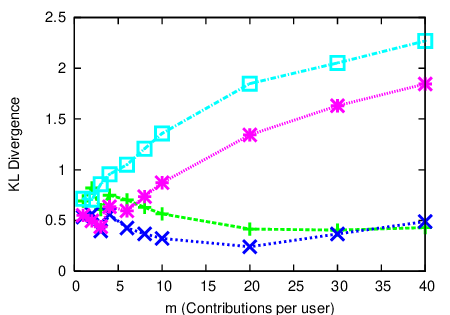,width=0.26\textwidth} & \hspace{-7mm}\epsfig{file=  ./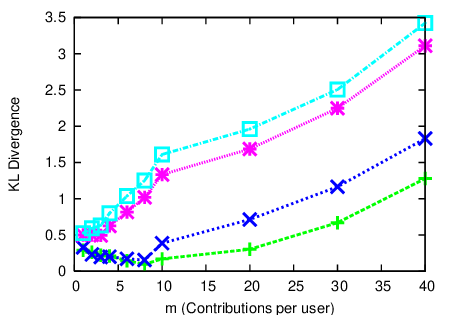,width=0.26\textwidth} \\
        \hspace{-7mm}\epsfig{file=  ./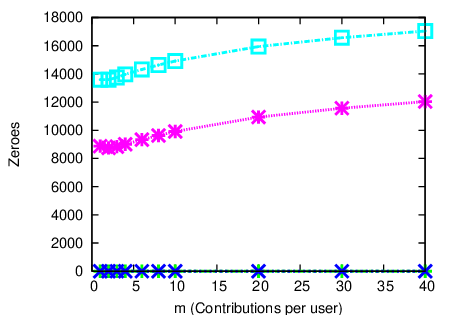,width=0.26\textwidth} & \hspace{-7mm}\epsfig{file=  ./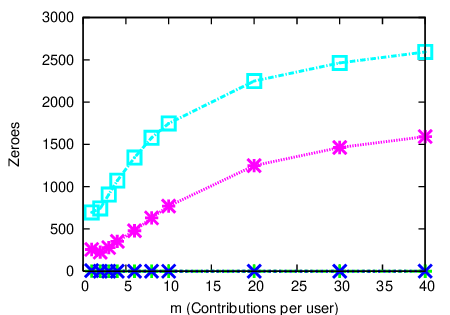,width=0.26\textwidth} & \hspace{-7mm}\epsfig{file=  ./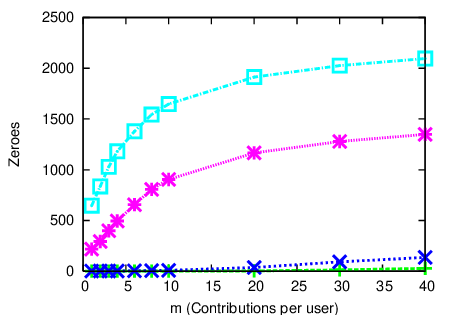,width=0.26\textwidth} & \hspace{-7mm}\epsfig{file=  ./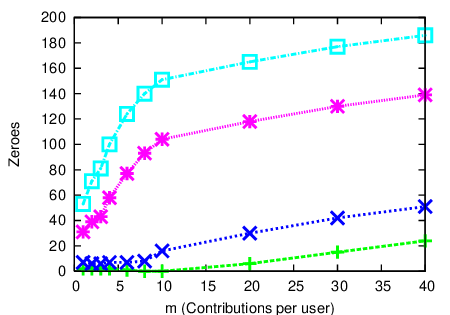,width=0.26\textwidth} \\
        \hspace{-7mm}\epsfig{file=  ./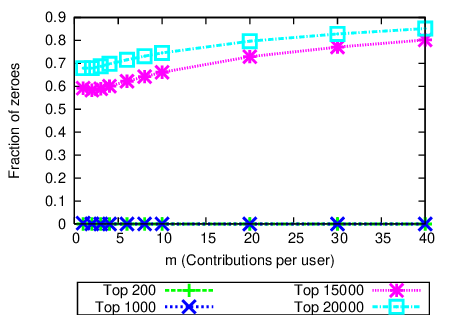,width=0.26\textwidth} & \hspace{-7mm}\epsfig{file=  ./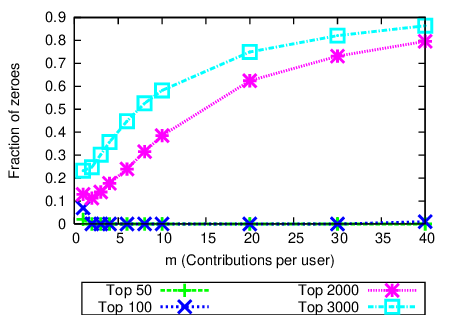,width=0.26\textwidth} & \hspace{-7mm}\epsfig{file=  ./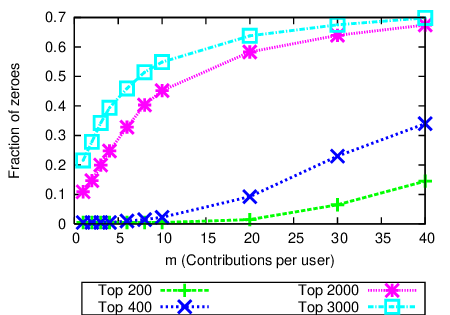,width=0.26\textwidth} & \hspace{-7mm}\epsfig{file=  ./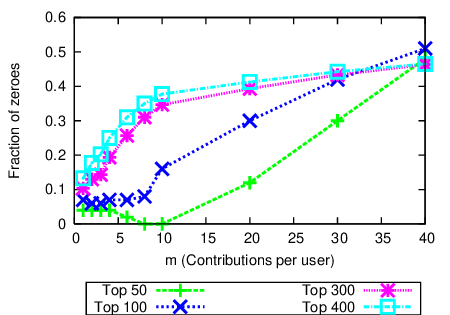,width=0.26\textwidth} \\

 (a) Keywords & (b) Queries & (c) Clicks & (d)  Query Pairs\\
\end{tabular}
\end{center}
\caption{Preservation of the counts of the top-$j$ most frequent items by ZEALOUS under varying $m$. The domain of items are keywords, queries, clicks, and query pairs. Preservation is measured as the average L1-distance and KL-divergence of the released counts and their true counts and the number and fraction of unpublished top-$j$ most frequent items  are shown.} \label{fig:stat:vary_m}
\end{figure*}

\begin{figure*}[t!]
    \begin{center}
        \begin{tabular}{c c c c}

        \hspace{-7mm}\epsfig{file=  ./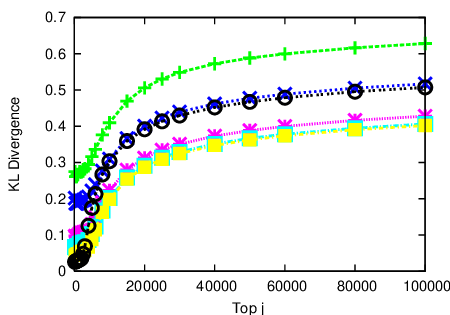,width=0.26\textwidth} & \hspace{-7mm}\epsfig{file=  ./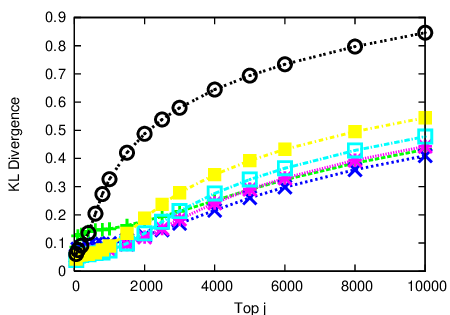,width=0.26\textwidth} & \hspace{-7mm}\epsfig{file=  ./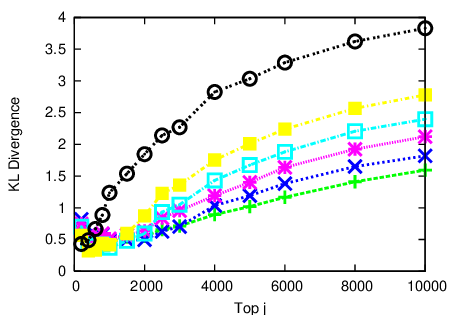,width=0.26\textwidth} & \hspace{-7mm}\epsfig{file=  ./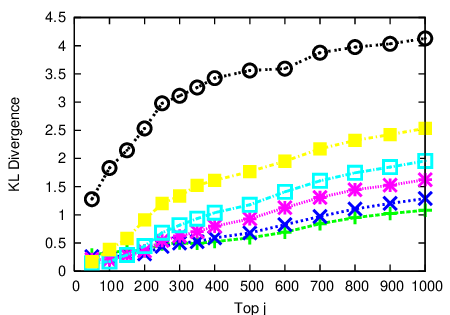,width=0.26\textwidth} \\

\hspace{-7mm}\epsfig{file=  ./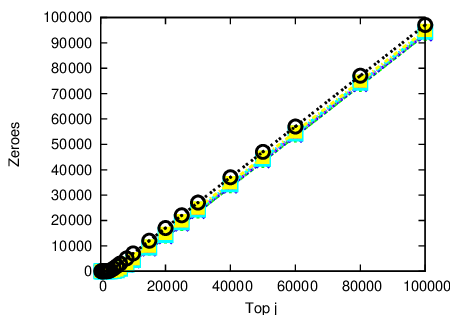,width=0.26\textwidth} & \hspace{-7mm}\epsfig{file=  ./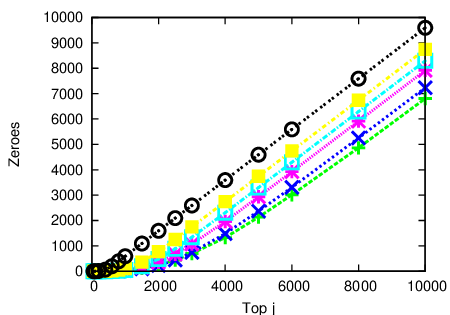,width=0.26\textwidth} & \hspace{-7mm}\epsfig{file=  ./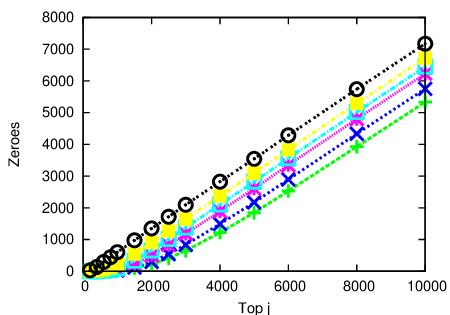,width=0.26\textwidth} & \hspace{-7mm}\epsfig{file=  ./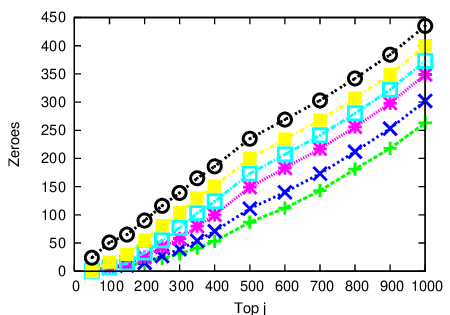,width=0.26\textwidth} \\
        \hspace{-7mm}\epsfig{file=  ./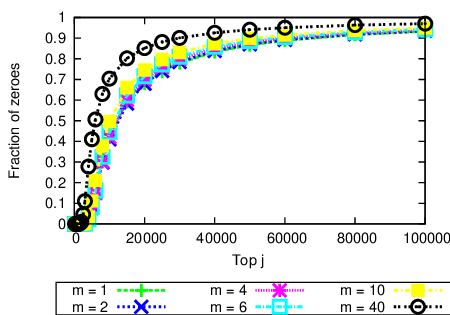,width=0.26\textwidth} & \hspace{-7mm}\epsfig{file=  ./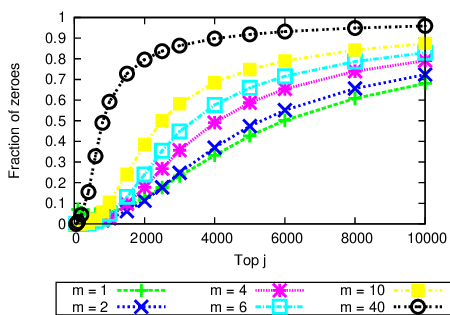,width=0.26\textwidth} & \hspace{-7mm}\epsfig{file=  ./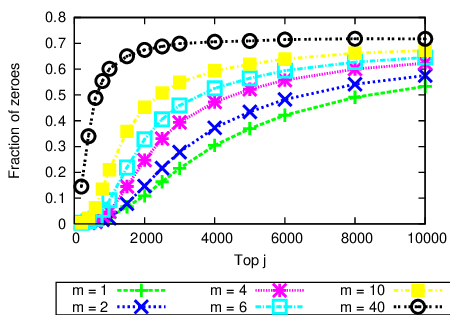,width=0.26\textwidth} & \hspace{-7mm}\epsfig{file=  ./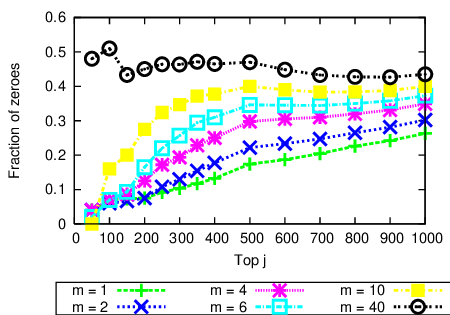,width=0.26\textwidth} \\

 (a) Keywords & (b) Queries & (c) Clicks & (d)  Query Pairs\\
\end{tabular}
\end{center}
\caption{Same as Figure~\ref{fig:stat:vary_m} except that the x-axis is now varying $j$ for top-$j$ and not $m$ for the number of contributions.} \label{fig:stat:vary_topk}
\end{figure*}

\eat{In this section we explain how to set the parameters, namely $\epsilon, \delta, m$ and $\tau$ for the ZEALOUS algorithm.
 Theorem~\ref{thm:alg_san} tells us how to set $\lambda, \tau'$ for fixed $m, \tau$ if we want to achieve $(\epsilon, \delta)$-probabilistic differential privacy.

We explore different levels of differential privacy by varying $\epsilon$ and different levels of anonymity by varying $k$. In all experiments we set $\delta=0.001$. Thus the probability that the output of ZEALOUS could breach the privacy of a user is appropriately small.
}

Apart from the privacy parameters $\epsilon$ and $\delta$, ZEALOUS
requires the data publisher to specify two more parameters: $\tau$,
the first threshold used to eliminate keywords with low counts (Step
3), and $m$, the number of contributions per user. These parameters
affect both the noise added to each count as well as the second
threshold $\tau'$. Before we discuss the choice of these parameters
we explain the general setup of our experiments.

\vspace{2mm} \noindent\textbf{Data.}
In our experiments we work with a search log of user queries from 
a major search engine collected from 500,000 users over a period of
one month.
This search log contains about one million distinct keywords, three
million distinct queries, three million distinct query pairs,
and 4.5 million distinct clicks.

\vspace{2mm} \noindent\textbf{Privacy Parameters.} In all experiments
we set $\delta=0.001$. Thus the probability that the output of ZEALOUS
could breach the privacy of any user is very small. We
explore different levels of $(\epsilon, \delta)$-probabilistic
differential privacy by varying $\epsilon$.

\subsection{Choosing Threshold $\tau$}

\begin{table}[t!]
\centering
\begin{tabular}{l | r r r r r r}
    $\tau$ & 1 &  3  & 4 & 5  & 7  & 9  \\
    \hline
    $\tau'$ & 81.1 & 78.7 & 78.6 & 78.7 &  79.3  & 80.3\\
\end{tabular}
\caption{ $\tau'$ as a function of $\tau$ for $m = 2$, $\epsilon = 1$, $\delta = 0.01$ \label{table:tau'} }
\end{table}

\eat{
\begin{table*}[t!]
\centering
\begin{tabular}{l | r r r r r  r r r r r}
    $\tau$ & 1 & 2 & 3 & 4 & 5 & 6 & 7 & 8 & 9 & 10 \\
    \hline
    $\tau'$ & 81.1205 & 79.3479 & 78.7260 & 78.5753 & 78.6827 & 78.9534 & 79.3368 & 79.8027 & 80.3316 & 80.9101 \\
\end{tabular}
\caption{ $\tau'$ as a function of $\tau$ for $m = 2$, $\epsilon = 1$, $\delta = 0.01$ \label{table:tau'} }
\end{table*}
} We would like to retain as much information as possible in the
published search log. A smaller value for $\tau'$ immediately leads to
a histogram with higher utility because fewer items and their noisy
counts are filtered out in the last step of ZEALOUS. Thus if we choose
$\tau$ in a way that minimizes $\tau'$ we maximize the utility of the
resulting histogram. Interestingly, choosing $\tau = 1$ does not
necessarily minimize the value of $\tau'$. Table~\ref{table:tau'}
presents the value of $\tau'$ for different values of $\tau$ for $m =
2$ and $\epsilon = 1$.  Table~\ref{table:tau'} shows that for our
parameter settings $\tau'$ is minimized when $\tau = 4$.  We can show the
following optimality result which tells us how to choose $\tau$
optimally in order to maximize utility.
\begin{proposition} \label{prop:choose_tau} \em For a fixed $\epsilon,
  \delta$ and $m$ choosing $\tau=\lceil 2m/\epsilon \rceil$ minimizes
  the value of $\tau'$.
\end{proposition}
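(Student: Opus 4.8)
The plan is to extract from Theorem~\ref{thm:alg_san} the exact dependence of $\tau'$ on $\tau$ and then minimize it. First I would fix the noise scale at its smallest admissible value, $\lambda = 2m/\epsilon$: constraint~\eqref{eqn:zealous-lambda} forces $\lambda \ge 2m/\epsilon$, and increasing $\lambda$ beyond that is never advantageous, since it increases the injected Laplace noise and, in the parameter regime of interest, also increases the right-hand side of~\eqref{eqn:zealous-tau} (the second term $-\lambda\ln\!\big(2\delta\tau/(Um)\big)$ dominates and is increasing in $\lambda$ as long as $2\delta\tau < Um$). With $\lambda$ fixed and $\tau'$ taken equal to the smallest value allowed by~\eqref{eqn:zealous-tau}, we get
\[
  \tau'(\tau)\;=\;\tau+\max\!\big(A,\; B-\lambda\ln\tau\big),
\]
where $A := -\lambda\ln\!\big(2-2e^{-1/\lambda}\big)$ and $B := -\lambda\ln\!\big(2\delta/(Um)\big)$ are independent of $\tau$, and $\tau$ ranges over the positive integers.

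Next I would analyze the two branches of the maximum. The map $\tau\mapsto\tau+A$ is strictly increasing, and $\tau\mapsto\tau+B-\lambda\ln\tau$ is strictly convex with derivative $1-\lambda/\tau$, hence uniquely minimized over the reals at $\tau=\lambda=2m/\epsilon$. Because $B-\lambda\ln\tau$ is strictly decreasing in $\tau$, the two branches coincide at a single crossover point $\tau^{\times}=e^{(B-A)/\lambda}$, with the convex branch active for $\tau\le\tau^{\times}$ and the linear one for $\tau\ge\tau^{\times}$. The only inequality that genuinely needs to be checked is that $\tau=\lambda$ lies in the convex region, i.e.\ $B-\lambda\ln\lambda\ge A$, equivalently $2\delta\lambda/(Um)\le 2\big(1-e^{-1/\lambda}\big)$; this follows from $1-e^{-1/\lambda}\ge 1/(\lambda+1)$ (a consequence of $e^{x}\ge 1+x$) combined with the very mild bound $\delta\lambda(\lambda+1)\le Um$, which holds with enormous room to spare for the parameters considered in the paper (large $U$, small $\delta$). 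Granting this, $\tau'(\tau)$ is decreasing on $(0,\lambda]$ and increasing on $[\lambda,\infty)$ — on the convex branch by the first-order condition, on the linear branch by monotonicity, and because the crossover lies strictly to the right of $\lambda$.

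Hence $\tau'$ is unimodal with real minimizer $2m/\epsilon$, and since $\tau$ must be a positive integer the optimal choice is $\tau=\lceil 2m/\epsilon\rceil$ (the ceiling also ensuring $\tau\ge 1$); when $2m/\epsilon$ is itself an integer this is exactly $2m/\epsilon$, consistent with Table~\ref{table:tau'}, where $2m/\epsilon=4$ and $\tau'$ is indeed minimized at $\tau=4$. I expect the main obstacle to be the bookkeeping around the maximum: one must verify that the logarithmic term, rather than the constant $A$, is the binding constraint in a neighborhood of the optimum, for otherwise the minimizer would be pulled below $2m/\epsilon$; this is precisely the inequality $2\delta\lambda/(Um)\le 2(1-e^{-1/\lambda})$ above, and it is where the hypotheses on $U$ and $\delta$ enter. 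The rest — convexity and the first-order condition for the logarithmic branch, monotonicity of the linear branch, and rounding to an integer — is routine.
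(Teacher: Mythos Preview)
Your proposal is correct and follows essentially the same approach as the paper: the paper's entire proof is the one-liner ``take the derivative of $\tau'$ as a function of $\tau$ (based on Equation~(\ref{eqn:zealous-tau})) to determine its minimum,'' which is exactly the core of your argument (the convex branch $\tau+B-\lambda\ln\tau$ has derivative $1-\lambda/\tau$, vanishing at $\tau=\lambda=2m/\epsilon$). You are simply far more careful than the paper about fixing $\lambda$ first, identifying which branch of the $\max$ is binding near the optimum, and handling the integer rounding---none of which the paper addresses explicitly.
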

The proof follows from taking the derivative of $\tau'$ as a function
of $\tau$ (based on Equation (\ref{eqn:zealous-tau})) to determine its
minimum.

\subsection{Choosing the Number of Contributions $m$}

Proposition~\ref{prop:choose_tau} tells us how to set $\tau$ in order
to maximize utility. Next we will discuss how to set $m$ optimally. We
will do so by studying the effect of varying $m$ on the coverage and
the precision of the top-$j$ most frequent items in the sanitized
histogram.  The \emph{top-j coverage of a sanitized search log} is
defined as the fraction of distinct items among the top-$j$ most
frequent items in the original search log that also appear in the
sanitized search log. The \emph{top-j precision of a sanitized search
  log} is defined as the distance between the relative frequencies in
the original search log versus the sanitized search log for the
top-$j$ most frequent items. In particular, we study two distance
metrics between the relative frequencies: the average L-1 distance and
the KL-divergence.

\begin{table}[t]
		\centering
(A)	Number of distinct items released under ZEALOUS. 
\\
\begin{tabular}{l | r r r r r}
        $m$     &   1       &  4   &   8   &  20  &  40\\
        \hline
    keywords    &   6667    & 6043  &   5372  &  4062 & 2964\\
    queries     &   3334    & 2087  &   1440  &  751 & 408\\
    clicks      &   2813    & 1576  &   1001  &  486 & 246\\
    query pairs &   331     & 169   &   100   &  40 & 13
\end{tabular}\\
\vspace{5mm}

(B)  Number of total items $\times 10^3$ released under ZEALOUS.
\\
\begin{tabular}{l | r r r r r}
        $m$     &   1               &   4     &   8   & 20 & 40\\
        \hline
    keywords    &   329        & 1157       &   1894 &  3106 & 3871 \\
    queries     &   147        & 314         & 402 &  464 & 439 \\
    clicks      &   118        & 234       & 286 & 317 & 290 \\
    query pairs & 8         & 14     & 15 & 12 & 7
\end{tabular}
\caption{
\label{table:total_items}
\label{table:distinct_items}  }
\end{table}
 
As a first study of coverage, Table~\ref{table:distinct_items}
shows the number of distinct items (recall that items can be keywords,
queries, query pairs, or clicks) in the sanitized search log as $m$
increases.  We observe that coverage decreases as we increase
$m$. Moreover, the decrease in the number of published items is more
dramatic for larger domains than for smaller domains. The number of
distinct keywords decreases by $55\%$ while at the same time the
number of distinct query pairs decreases by $96\%$ as we increase $m$
from $1$ to $40$.  This trend has two reasons. First, from
Theorem~\ref{thm:alg_san} and Proposition~\ref{prop:choose_tau} we see
that threshold $\tau'$ increases super-linearly in $m$. Second, as $m$
increases the number of keywords contributed by the users increases
only sub-linearly in $m$; fewer users are able to supply $m$ items for
increasing values of $m$. Hence, fewer items pass the threshold
$\tau'$ as $m$ increases. The reduction is larger for query pairs than
for keywords, because the average number of query pairs per user is
smaller than the average number of keywords per user in the original
search log (shown in Table~\ref{table:avg-m}).

\begin{table}[t]
	\centering
\begin{tabular}{ r r r r}
             keywords & queries & clicks & query pairs\\
            \hline
   			 56 & 20 & 14 & 7
\end{tabular}
\caption{ Avg number of items per user in the original search log
\label{table:avg-m} }
\end{table}

To understand how $m$ affects precision, we measure the total sum of
the counts in the sanitized histogram as we increase $m$ in
Table~\ref{table:total_items}. Higher total counts offer the
possibility to match the original distribution at a finer 
granularity.  We
observe that as we increase $m$, the total counts increase until a
tipping point is reached after which they start decreasing again. This
effect is as expected for the following reason: As $m$ increases, each
user contributes more items, which leads to higher counts in the
sanitized histogram. However, the total count increases only
sub-linearly with $m$ (and even decreases) due to the reduction in
coverage we discussed above. We found that the tipping point where the
total count starts to decrease corresponds approximately to the
average number of items contributed by each user in the original
search log (shown in Table~\ref{table:avg-m}). This suggests that we
should choose $m$ to be smaller than the average number of items,
because it offers better coverage, higher total counts, and reduces the
noise compared to higher values of $m$.

Let us take a closer look at the precision and coverage of the
histograms of the various domains in Figures~\ref{fig:stat:vary_m} and
\ref{fig:stat:vary_topk}. In Figure~\ref{fig:stat:vary_m} we vary $m$
between 1 and 40. Each curve plots the precision or coverage of the
sanitized search log at various values of the top-$j$ parameter in
comparison to the original search log.  We vary the top-$j$ parameter
but never choose it higher than the number of distinct items in the
original search log for the various domains. The first two rows plot
precision curves for the average L-1 distance (first row) and the
KL-divergence (second row) of the relative frequencies.  
The lower two
rows plot the coverage curves, i.e., the total number of top-$j$ items
(third row) and the relative number of top-$j$ items (fourth row) in
the original search log that do not appear in sanitized search log.
First, observe that the coverage decreases as $m$ increases, which
confirms our discussion about the number of distinct items. Moreover,
we see that the coverage gets worse for increasing values of the
top-$j$ parameter. This illustrates that ZEALOUS gives better utility
for the more frequent items. Second, note that for small values of the
top-$j$ parameter, values of $m > 1$ give better precision. However,
when the top-$j$ parameter is increased, $m=1$ gives better precision
because the precision of the top-$j$ values degrades due to items no
longer appearing in the sanitized search log due to the increased
cutoffs.

Figure~\ref{fig:stat:vary_topk} shows the same statistics varying the
top-$j$ parameter on the x-axis. Each curve plots the precision for $m
= 1, 2, 4, 8, 10, 40$, respectively. Note that $m=1$ does not always
give the best precision; for keywords, $m = 8$ has the lowest
KL-divergence, and for queries, $m = 2$ has the lowest KL-divergence.
As we can see from these results, there are two ``regimes'' for
setting the value of $m$. If we are mainly interested in coverage,
then $m$ should be set to $1$. However, if we are only interested in a
few top-$j$ items then we can increase precision by choosing a larger
value for $m$; and in this case we recommend the average number of
items per user.

We will see this dichotomy again in our real applications of search
log analysis: The index caching application does not require high
coverage because of its storage restriction. However, high precision
of the top-$j$ most frequent items is necessary to determine which of
them to keep in memory. On the other hand, in order to generate many
query substitutions, a larger number of distinct queries and query
pairs is required. Thus $m$ should be set to a large value for index
caching and to a small value for query substitution.

\section{Application-Oriented Evaluation}
 \label{sec:apps}  \label{sec:exp_comp}

In this section we show the results of an application-oriented
evaluation of privacy-preserving search logs generated by ZEALOUS in comparison to a $k$-anonymous search log and
the original search log as points of comparison.
Note that our utility
evaluation does not determine the ``better'' algorithm since when
choosing an algorithm in practice one has to consider both the utility
and disclosure limitation guarantees of an algorithm.  
Our results show the ``price'' that we have to pay (in terms of
decreased utility) when we give the stronger guarantees of
(probabilistic versions of) differential privacy as opposed to
$k$-anonymity.

%

\vspace{2mm} \noindent\textbf{Algorithms.}

We experimentally compare the utility of ZEALOUS against a
representative $k$-anonymity algorithm by Adar for publishing search
logs~\cite{Adar07:AnonQueryLogs}.  Recall that Adar's Algorithm
creates a $k$-query anonymous search log as follows: First all queries
that are posed by fewer than $k$ distinct users are eliminated. Then
histograms of keywords, queries, and query pairs from the $k$-query
anonymous search log are computed.
%
ZEALOUS can be used to achieve
$(\epsilon',\delta')$-indistinguishability as well as
$(\epsilon,\delta)$-probabilistic differential privacy. For the ease
of presentation we only show results with probabilistic differential
privacy; using Theorems~\ref{thm:alg_san_indist} and
\ref{thm:alg_san} it is straightforward to compute the corresponding
indistinguishability guarantee.  For brevity, we refer to the
$(\epsilon,\delta)$-probabilistic differentially private algorithm as
$\epsilon$--Differential in the figures.

%

\vspace{2mm} \noindent\textbf{Evaluation Metrics.} 

We evaluate the performance of the algorithms in two ways.  First, we
measure how well the output of the algorithms preserves selected
statistics of the original search log.  Second, we pick two real
applications from the information retrieval community to evaluate the
utility of ZEALOUS: Index caching as a representative application for
search performance, and query substitution as a representative
application for search quality. Evaluating the output of ZEALOUS with
these two applications will help us to fully understand the
performance of ZEALOUS in an application context.  We first describe
our utility evaluation with statistics in Section
\ref{sec:exp_statistics} and then our evaluation with real
applications in Sections \ref{sec:exp_index} and \ref{sec:exp_subs}.

\subsection{General Statistics} \label{sec:exp_statistics}

We explore different statistics that measure the difference of
sanitized histograms to the histograms computed using the original search log. We
analyze the histograms of keywords, queries, and query pairs for
both sanitization methods. For clicks we only consider ZEALOUS histograms 
since a $k$-query anonymous search log is not designed to publish
click data.

In our first experiment we compare the distribution of the counts in
the histograms. Note that a $k$-query anonymous search log will never
have query and keyword counts below $k$, and similarly a ZEALOUS histogram 
will never have counts below $\tau'$.  We choose
$\epsilon = 5, m = 1$ for which threshold $\tau' \approx
10$. Therefore we deliberately set $k=10$ such that $k \approx \tau'$.

\begin{figure*}[t!]
    \begin{center}
    \begin{tabular}{c c c c}
    \hspace{-5mm}\epsfig{file=  ./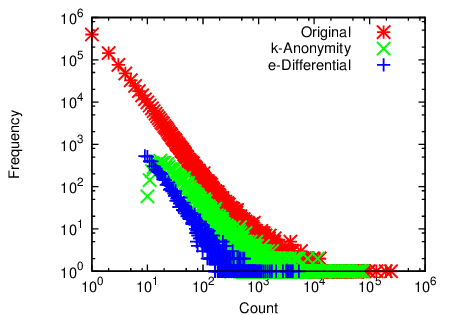,width=0.25\textwidth}    &
    \hspace{-5mm}\epsfig{file=  ./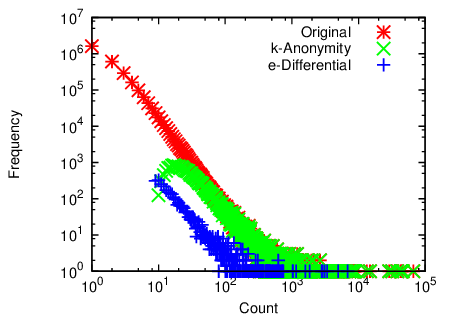,width=0.25\textwidth}      &

    \hspace{-5mm}\epsfig{file=  ./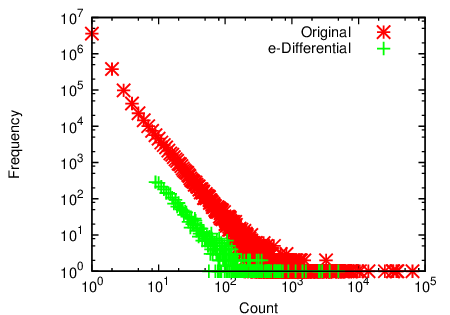,width=0.25\textwidth} &
    \hspace{-5mm}\epsfig{file=  ./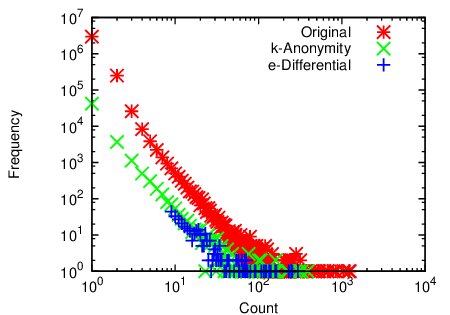,width=0.25\textwidth}\\
        (a) Keyword Counts & (b) Query Counts &     (c) Click Counts & (d) Query Pair Counts
        \end{tabular}
        \end{center}
        \caption{ Distributions of counts in the histograms over keywords, queries, clicks, and query pairs in the original search log and its sanitized versions created by $10$-anonymity and $5$-probabilistic differential privacy (with $m=1$). 
} \label{fig:distr}
\end{figure*}

\begin{figure*}[t!]
    \begin{center}
        \begin{tabular}{ c c c c} \hspace{-4mm}\epsfig{file=  ./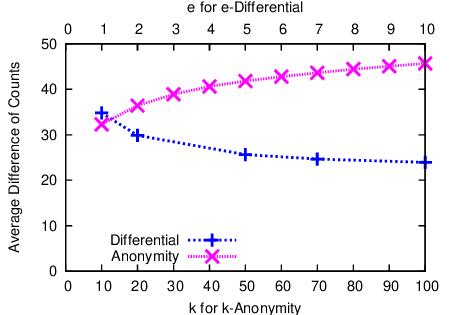,width=0.26\textwidth} & \hspace{-7mm}\epsfig{file=  ./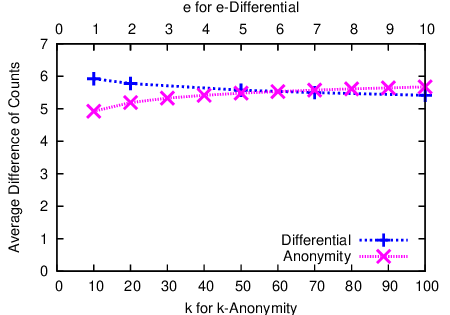,width=0.26\textwidth} & \hspace{-7mm}\epsfig{file=  ./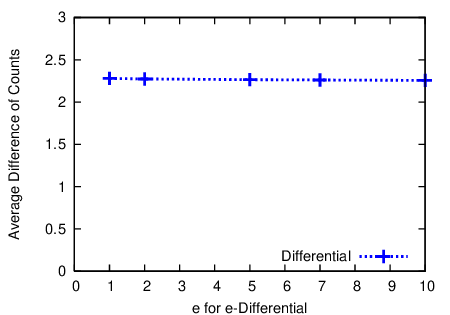,width=0.26\textwidth} &
    \hspace{-7mm}\epsfig{file=  ./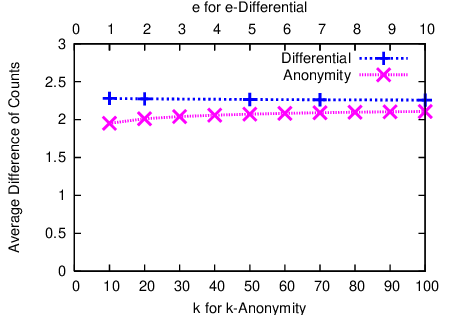,width=0.26\textwidth} \\
 (a) Keywords & (b) Queries & (c) Clicks & (d)  Query Pairs \\
\end{tabular}
\end{center}
\caption{ Average difference between counts of items in the original
histogram and the $\epsilon$-probabilistic differential privacy-preserving
histogram, and the $k$-anonymous histogram for varying parameters $\epsilon$ (with $m=1$) and $k$. 
} \label{fig:stats}
\end{figure*}

Figure~\ref{fig:distr} shows the distribution of the counts in the
histograms on a log-log scale. Recall that the $k$-query anonymous
search log does not contain any click data, and thus it does not
appear in Figure~\ref{fig:distr}(c). We see that the power-law shape
of the distribution is well preserved. However, the total frequencies
are lower for the sanitized search logs than the frequencies in the
original histogram because the algorithms filter out a large number of
items. We also see the cutoffs created by $k$ and $\tau'$.  We observe
that as the domain increases from keywords to clicks and query pairs,
the number of items that are not frequent in the original search log
increases. For example, the number of clicks with count equal to one
is an order of magnitude larger than the number of keywords with count
equal to one.

While the shape of the count distribution is well preserved, we would
also like to know whether the counts of frequent keywords, queries,
query pairs, and clicks are also preserved and what impact the privacy
parameters $\epsilon$ and the anonymity parameter $k$ have.
Figure~\ref{fig:stats} shows the average differences to the counts in
the original histogram. We scaled up the counts in sanitized
histograms by a common factor so that the total counts were equal to
the total counts of the original histogram, then we calculated the
average difference between the counts. The average is taken over all
keywords that have non-zero count in the original search log. As such
this metric takes both coverage and precision into account.

As expected, with increasing $\epsilon$ the average difference
decreases, since the noise added to each count decreases. Similarly,
by decreasing $k$ the accuracy increases because more queries will
pass the threshold. Figure~\ref{fig:stats} shows that the average
difference is comparable for the $k$--anonymous histogram and the
output of ZEALOUS. Note that the output of ZEALOUS for keywords is
more accurate than a $k$-anonymous histogram for all values of
$\epsilon > 2$.  For queries we obtain roughly the same average
difference for $k = 60$ and $\epsilon = 6$. For query pairs the
$k$-query anonymous histogram provides better utility.

We also computed other metrics such as the root-mean-square value of
the differences and the total variation difference; they all reveal
similar qualitative trends. Despite the fact that ZEALOUS
disregards many search log records (by throwing out all but $m$
contributions per user and by throwing out low frequent counts),
ZEALOUS is able to preserve the overall distribution well.

\subsection{Index Caching} \label{sub-sec:mem} \label{sec:exp_index}


In the index caching problem, we aim to cache in-memory a set 
of posting lists that maximizes the hit-probability over all keywords
(see Section\ref{sec:expUtilM}).
In our experiments, we use an improved version of the algorithm
developed by Baeza--Yates to decide which posting lists
should be kept in memory~\cite{Baeza04}. Our algorithm first assigns each keyword a
score, which equals its frequency in the search log divided by the
number of documents that contain the keyword. Keywords are chosen
using a greedy bin-packing strategy where we sequentially add posting lists from
the keywords with the highest score until the memory is filled.
In our experiments we
fixed the memory size to be 1 GB, and each document posting to be 8
Bytes (other parameters give comparable results). Our inverted index
stores the document posting list for each keyword sorted according to
their relevance which allows to retrieve the documents in the order of
their relevance. We truncate this list in memory to contain at most
200,000 documents. Hence, for an incoming query the search engine
retrieves the posting list for each keyword in the query either from
memory or from disk.  If the intersection of the posting lists happens
to be empty, then less relevant documents are retrieved from disk for
those keywords for which only the truncated posting list is kept on
memory.
\begin{figure}[t!]
    \begin{center}
        \begin{tabular}{ c c} 
\hspace{-7mm}\epsfig{file=  ./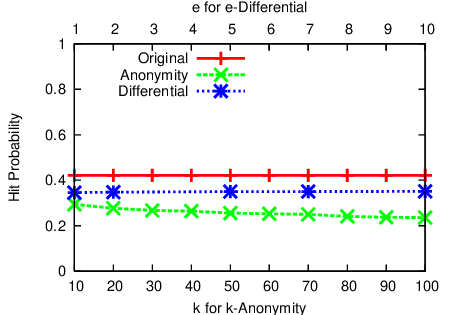,width=0.25\textwidth} & \hspace{-5mm} \epsfig{file=  ./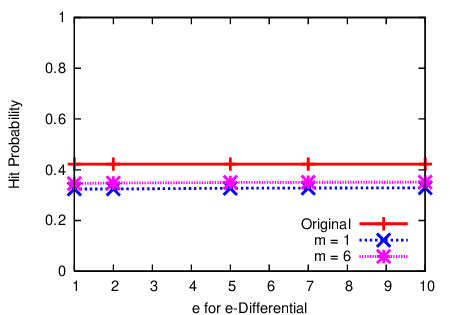,width=0.25\textwidth} \\
 (a)   & (b) \\
\end{tabular}
\end{center} \caption{Hit probabilities of the inverted index construction based on the original search log, the $k$-anonymous search log, and the ZEALOUS histogram under varying parameters $k$ and $\epsilon$ (and contributions $m$ in (b)).
} \label{fig:memory}
\end{figure}

Figure~\ref{fig:memory}(a) shows the hit--probabilities of the
inverted index constructed using the original search log, the
$k$-anonymous search log, and the ZEALOUS histogram (for $m=6$) with
our greedy approximation algorithm. We observe that our ZEALOUS
histogram achieves better utility than the $k$-query anonymous search
log for a range of parameters. We note that the utility suffers only
marginally when increasing the privacy parameter or the anonymity
parameter (at least in the range that we have considered). This can be
explained by the fact that it requires only a few very frequent
keywords to achieve a high hit--probability. Keywords with a big
positive impact on the hit-probability are less likely to be filtered
out by ZEALOUS than keywords with a small positive impact. This
explains the marginal decrease in utility for increased privacy.

As a last experiment we study the effect of varying $m$ on the
hit-probability in Figure~\ref{fig:memory}(b). We observe that the hit
probability for $m = 6$ is above 0.36 whereas the hit probability for
$m = 1$ is less than 0.33. As discussed a higher value for $m$
increases the accuracy, but reduces the coverage. Index caching really
requires roughly the top 85 most frequent keywords that are still
covered when setting $m=6$. We also experimented with higher values of
$m$ and observed that the hit-probability decreases at some point.

\subsection{Query Substitution} \label{sub-sec:subst} \label{sec:exp_subs}

Algorithms for query substitution examine query pairs to learn how users re-phrase queries. We use an algorithm developed by Jones et al. in which  related
queries for a query are identified in two steps~\cite{JonesRMG06:QuerySubstitution}. First, the query is
partitioned into subsets of keywords, called {\em phrases}, based on
their mutual information. Next, for each phrase, candidate query
substitutions are determined based on the distribution of 
queries.

We run this algorithm to generate ranked substitution on the sanitized
search logs.  We then compare these rankings with the rankings
produced by the original search log which serve as ground truth.  To
measure the quality of the query substitutions, we compute the
precision/recall, MAP (mean average precision) and NDG (normalized
discounted cumulative gain) of the top-$j$ suggestions for each query;
let us define these metrics next.

Consider a query $q$ and its list of top-$j$ ranked substitutions
$q_0', \dots, q_{j-1}'$ computed based on a sanitized search log. We
compare this ranking against the top-$j$ ranked substitutions $q_0,
\dots, q_{j-1}$ computed based on the original search log as
follows. The \emph{precision} of a query $q$ is the fraction of substitutions from
the sanitized search log that are also contained in our ground truth
ranking:
\[
\text{Precision}(q) = \frac{| \{q_0, \dots, q_{j-1}\} \cap \{q_0',
\dots, q_{j-1}'\} |}{|\{ q_0', \dots, q_{j-1}'\}|}
\]
Note, that the number of items in the ranking for a query $q$ can be
less than $j$. The \emph{recall} of a query $q$  is the fraction of substitutions in
our ground truth that are contained in the substitutions from the
sanitized search log:
\[
\text{Recall}(q) = \frac{| \{q_0, \dots, q_{j-1}\} \cap \{q_0',
\dots, q_{j-1}'\} |}{|\{ q_0, \dots, q_{j-1}\}|}
\]
MAP measures the precision of the ranked items for a
query as the ratio of true rank and assigned rank:
\[
\text{MAP}(q) = \sum_{i=0}^{j-1}\frac{i+1}{\text{rank of } q_i
\text{ in } [q_0', \dots, q_{j-1}']+1},
\]
where the rank of $q_i$ is zero in case it does is not contained in
the list $[q_0', \dots, q_{j-1}']$ otherwise it is $i'$, s.t. $q_i =
q_{i'}'$.

Our last metric called NDCG measures how the relevant substitutions
are placed in the ranking list. It does not only compare the ranks of
a substitution in the two rankings, but is also penalizes highly
relevant substitutions according to $[q_0, \dots, q_{j-1}]$ that have
a very low rank in $[q_0', \dots, q_{j-1}']$. Moreover, it takes the
length of the actual lists into consideration. We refer the reader to
the paper by Chakrabarti et al.~\cite{ChakrabartiKSB08} for details on
NDCG.

The discussed metrics compare rankings for one query. To compare the
utility of our algorithms, we average over all queries. For coverage
we average over all queries for which the original search log produces
substitutions. For all other metrics that try to capture the precision
of a ranking, we average only over the queries for which the sanitized
search logs produce substitutions. We generated query substitution
only for the 100,000 most frequent queries of the original search log
since the substitution algorithm only works well given enough
information about a query.

\begin{figure*}[t]
    \begin{center}
        \begin{tabular}{ c c c c}
    \hspace{-7mm}\epsfig{file=  ./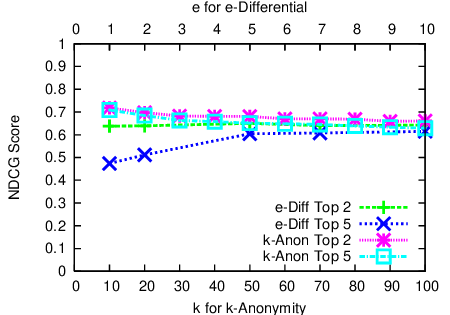, width=0.26\textwidth}&
    \hspace{-6mm}\epsfig{file=  ./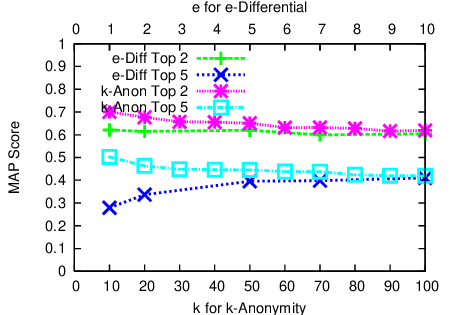,width=0.26\textwidth}&
    \hspace{-6mm}\epsfig{file=  ./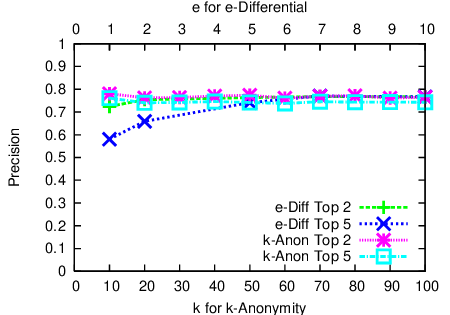, width=0.26\textwidth}   &
    \hspace{-6mm}\epsfig{file=  ./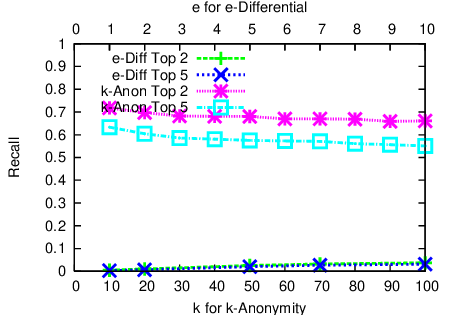, width=0.26\textwidth} \\
        (a) NDCG & (b) MAP & (c) Precision & (d) Recall
        \end{tabular}
    \end{center}
    \caption{ Quality of the query substitutions of the $\epsilon$-probabilistic differential privacy-preserving	histogram, and the $k$-anonymous histogram for varying parameters $\epsilon$ (with $m=1$) and $k$. The quality is measured by comparing the top-$2$ and top-$5$ suggested query substitutions to the ground truth recording the NDCG, MAP, precision, and recall.} \label{fig:sub}
\end{figure*}

\begin{figure}[t!]
    \begin{center}
        \begin{tabular}{ c }
            \epsfig{file=  ./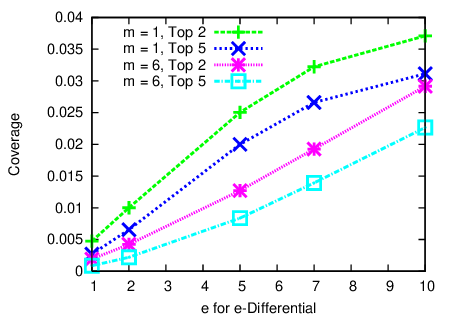,width=0.36\textwidth}
\end{tabular}
\end{center} \caption{Coverage of the query substitutions of the privacy-preserving	histogram for $m=1$ and $m=6$.} \label{fig:sub_vary_m}
\end{figure}

In Figure~\ref{fig:sub} we vary $k$ and $\epsilon$ for $m=1$ and we
draw the utility curves for top-$j$ for $j=2$ and $j=5$. We observe
that varying $\epsilon$ and $k$ has hardly any influence on 
performance. On all precision measures, ZEALOUS provides utility
comparable to $k$-query-anonymity. However, the coverage provided by
ZEALOUS is not good. This is because the computation of query
substitutions relies not only on the frequent query pairs but also on
the count of phrase pairs which record for two sets of keywords how
often a query containing the first set was followed by another query
containing the second set. Thus a phrase pair can have a high
frequency even though all query pairs it is contained in have very low
frequency. ZEALOUS filters out these low frequency query pairs and
thus loses many frequent phrase pairs.

As a last experiment, we study the effect of increasing $m$ for query
substitutions. Figure~\ref{fig:sub_vary_m} plots the average coverage
of the top-2 and top-5 substitutions produced by ZEALOUS for $m = 1$
and $m=6$ for various values of $\epsilon$. It is clear that across
the board larger values of $m$ lead to smaller coverage, thus
confirming our intuition outlined the previous section.

\section{Related Work}
Related work on anonymity in search logs~\cite{Adar07:AnonQueryLogs,HeN09:HierarchicalAnon,MotwaniN:searchlogs,YuanAnonymization09} is discussed in Section~\ref{sec:anon}. 

More recently, there has been work on privacy in search logs. Korolova et al.~\cite{KorolovaKMN09:PrivateQueries}  proposes the same basic algorithm that we propose in~\cite{GoetzMWXG09:searchlogsV2} and review in Section~\ref{sec:alg}.\footnote{In order to improve utility of the algorithm as stated in \cite{KorolovaKMN09:PrivateQueries}, we suggest to first filter out infrequent keywords using the 2-threshold approach of ZEALOUS and then publish noise counts of queries consisting of up to 3 frequent keywords and the clicks of their top ranked documents.}
They show $(\epsilon', \delta')$-indistinguishability of the algorithm whereas we show $(\epsilon, \delta)$-probabilistic differential privacy of the algorithm which is a strictly stronger guarantee (see  Section~\ref{sec:comp:rel}).
One difference is that our algorithm has two thresholds $\tau,\tau'$ as opposed to one and we explain how to set threshold $\tau$ optimally. Korolova et al.~\cite{KorolovaKMN09:PrivateQueries} set $\tau=1$ (which is not the optimal choice in many cases).
Our experiments augment and extend the experiments of Korolova et al.~\cite{KorolovaKMN09:PrivateQueries}.
We illustrate the tradeoff of setting  the number of contributions $m$ for various domains and statistics including L1-distance and KL divergence which extends \cite{KorolovaKMN09:PrivateQueries} greatly.
Our application oriented evaluation considers different applications. We compare the performance of ZEALOUS to that of $k$-query anonymity and observe that the loss in utility is comparable for anonymity and privacy while  anonymity offers a much weaker guarantee.

\section{Beyond Search Logs}

While the main focus of this paper are search logs, our results apply
to other scenarios as well.  For example, consider a retailer who
collects customer transactions. Each transaction consists of a basket
of products together with their prices, and a time-stamp. In this case
ZEALOUS can be applied to publish frequently purchased products or
sets of products. This information can also be used in a recommender system
or in a market basket analysis to decide on the goods and promotions
in a store~\cite{cite:book/DataMining/2000}.
Another example concerns monitoring the health of patients. Each time a patient sees a doctor the doctor records the diseases of the patient and the suggested treatment. It would be interesting to publish frequent combinations of diseases.

All of our results apply to the more general problem of publishing
frequent items / itemsets / consecutive itemsets.  Existing work on
publishing frequent itemsets often only tries to achieve anonymity or
makes strong assumptions about the background knowledge of an
attacker, see for example some of the references in the survey by Luo
et al.~\cite{Luo:SurveyItemsets09}.

\section{Conclusions}\label{sec:conclusions}
This paper contains a comparative study about publishing frequent keywords, queries, and clicks in search logs. We compare the disclosure limitation guarantees and the theoretical and practical utility of various approaches. Our comparison includes earlier work on anonymity and $(\epsilon',\delta')$--indistinguishability and our proposed solution to achieve  $(\epsilon, \delta)$-probabilistic differential privacy in search logs.
In our comparison, we revealed interesting relationships between indistinguishability and probabilistic differential privacy which might be of independent interest. Our results (positive as well as negative) can be applied more generally to the problem of publishing frequent items or itemsets.

A topic of future work is the development of algorithms to release useful information about \emph{infrequent} keywords, queries, and clicks in a search log while preserving user privacy.

\vspace{2mm} 
\noindent {\bf Acknowledgments.} We would like to thank our colleagues Filip
Radlinski and Yisong Yue for helpful discussions about
the usage of search logs.


\bibliographystyle{plain}
\bibliography{references}

\appendix
\section{Online Appendix}
This appendix is available online~\cite{GoetzMWXG09:searchlogsV2}. We provide it here for the convenience of the reviewers. It is not meant to be part of the final paper.

\subsection{Analysis of ZEALOUS: Proof of Theorem 10} \label{app:analysis}
 
	Let $H$ be the keyword histogram constructed by ZEALOUS in Step 2  when applied to $S$  and $K$ be the set of keywords in $H$ whose count equals $\tau$. Let $\Omega$ be the set of keyword histograms, that do not contain any keyword in $K$. For notational simplicity, let us denote ZEALOUS as a function $Z$. We will prove the theorem by showing that, given Equations (\ref{eqn:zealous-lambda}) and (\ref{eqn:zealous-tau}),
\begin{equation} \label{eqn:zealous-condition1}
Pr[Z(S) \notin \Omega] \le \delta,
\end{equation}
and for any keyword histogram $\omega \in \Omega$ and for any neighboring search log $S'$ of $S$,
\begin{equation} \label{eqn:zealous-condition2}
e^{-\epsilon} \cdot Pr[Z(S') \!=\! \omega] \le Pr[Z(S)\! = \!\omega] \le e^{\epsilon} \cdot Pr[Z(S')\! =\! \omega].
\end{equation}

We will first prove that Equation (\ref{eqn:zealous-condition1}) holds. Assume that the $i$-th keyword in $K$ has a count $\tilde{c}_i$ in $Z(S)$ for $i \in [1, |K|]$. Then,
\begin{eqnarray}
\lefteqn{Pr[Z(S) \notin \Omega]} \nonumber \\ 
&=& Pr\Big[\exists i \in [1, |K|], \tilde{c}_i > \tau'\Big] \nonumber \\
&=& 1 - Pr\Big[\forall i \in [1, |K|], \tilde{c}_i \le \tau' \Big] \nonumber \\
&=& 1 - \prod_{i \in [1, |K|]} \left(\int_{-\infty}^{\tau'-\tau} \frac{1}{2 \lambda} e^{-\frac{|x|}{\lambda}}dx \right) \nonumber \\
& & \textrm{(the noise added to $\tilde{c}_i$ has to be $\geq \tau'-\tau$)} \nonumber \\
&=& 1 - \left(1 - \frac{1}{2} \cdot e^{-\frac{\tau' - \tau}{\lambda}}\right)^{|K|} \nonumber \\
&\le& \frac{|K|}{2} \cdot e^{-\frac{\tau' - \tau}{\lambda}} \nonumber \\
&\le& \frac{U \cdot m}{2\tau} \cdot e^{-\frac{\tau' - \tau}{\lambda}} \qquad \textrm{(because $|K| \le U\cdot m / \tau$)} \nonumber \\
& &  \nonumber \\
&\le& \frac{U \cdot m}{2\tau} \cdot e^{-\frac{-\lambda \ln\left(\frac{2\delta}{U \cdot m / \tau}\right)}{\lambda}} \qquad \textrm{(by Equation~\ref{eqn:zealous-tau})} \nonumber \\
&=& \delta.
\end{eqnarray}

Next, we will show that Equation (\ref{eqn:zealous-condition2}) also holds. Let $S'$ be any neighboring search log of $S$. Let $\omega$ be any possible output of ZEALOUS given $S$, such that $\omega \in \Omega$. To establish Equation (\ref{eqn:zealous-condition2}), it suffices to prove that
\begin{align}
&\frac{Pr[Z(S)=\omega]}{Pr[Z(S')=\omega]} \le e^{\epsilon}, \text{ and} \label{eqn:zealous-ratio-1} \\
&\frac{Pr[Z(S')=\omega]}{Pr[Z(S)=\omega]} \le e^{\epsilon}\label{eqn:zealous-ratio-2}.
\end{align}

We will derive Equation (\ref{eqn:zealous-ratio-1}). The proof of
(\ref{eqn:zealous-ratio-2}) is analogous.

Let $H'$ be the keyword histogram constructed by ZEALOUS in Step 2  when applied to $S'$. Let $\Delta$ be the set of keywords that have different counts in $H$ and $H'$. Since $S$ and $S'$ differ in the search history of a single user, and each user contributes at most $m$ keywords, we have $|\Delta| \le 2m$. Let $k_i$ $(i \in [1, |\Delta|])$ be the $i$-th keyword in $\Delta$, and $d_i$, $d_i'$, and $d_i^*$ be the counts of $k_i$ in $H$, $H'$, and $\omega$, respectively. Since a user adds at most one to the count of a keyword (see Step 2.), we have $d_i - d_i' = 1$ for any $i \in [1, |\Delta|]$. To simplify notation, let $E_i$, $E_i'$, and ${E_i}^*,{E_i'}^* $ denote the event that $k_i$ has counts $d_i$, $d_i'$, $d_i^*$ in $H$, $H'$, and $Z(S), Z(S')$, respectively. Therefore,
\begin{equation} \nonumber
\frac{Pr[Z(S)=\omega]}{Pr[Z(S')=\omega]} = \prod_{i \in [1, |\Delta|]} \frac{Pr[{E_i}^* \mid E_i]}{Pr[{E_i'}^* \mid E_i']}.
\end{equation}

In what follows, we will show that $\frac{Pr[{E_i}^* \mid E_i]}{Pr[{E_i'}^* \mid E_i']} \le e^{1/\lambda}$ for any $i \in [1, |\Delta|]$. We differentiate three cases: (i) $d_i \geq \tau$, $d_i^* \geq \tau$, (ii)  $d_i < \tau$ and (iii) $d_i = \tau$ and $d_i^* =\tau-1$.

Consider  case (i) when $d_i$ and $ d_i^*$ are at least $\tau$. 
Then, if $d_i^* > 0$, we have
\begin{eqnarray}
\lefteqn{\frac{Pr[{E_i}^* \mid E_i]}{Pr[{E_i'}^* \mid E_i']} } \nonumber \\
&=& \frac{\frac{1}{2\lambda} e^{-|d_i^* - d_i|/\lambda}}{\frac{1}{2\lambda} e^{-|d_i^* - d_i'|/\lambda}} \nonumber \\
&=& e^{(|d_i^* - d_i'| - |d_i^* - d_i|)/\lambda} \nonumber \\
&\le& e^{|d_i - d_i'|/\lambda} \nonumber \\
&=& e^\frac{1}{\lambda}. \qquad \textrm{(because $|d_i - d_i'| = 1$ for any $i$)}\nonumber
\end{eqnarray}
On the other hand, if $d_i^* =0$,
\begin{eqnarray}
\frac{Pr[{E_i}^* \mid E_i]}{Pr[{E_i'}^* \mid E_i']} 
= \frac{\int_{-\infty}^{\tau' - d_i}\frac{1}{2\lambda} e^{-|x|/\lambda}dx}{\int_{-\infty}^{\tau' - d_i'}\frac{1}{2\lambda} e^{-|x|/\lambda}dx} 
\le e^\frac{1}{\lambda}. \nonumber
\end{eqnarray}

Now consider case  (ii) when $d_i $ is less than $ \tau$. Since $\omega \in \Omega$, and ZEALOUS eliminates all counts in $H$ that are smaller than $\tau$, we have $d_i^* = 0$, and $Pr[E_i^* \mid E_i] = 1$. On the other hand,
\begin{equation} \nonumber \\
Pr[{E_i'}^* \mid E_i'] =
\left\{
\begin{array}{ll}
1, & \textrm{if $d_i' \le \tau$} \\
1 - \frac{1}{2} e^{-|\tau' - d_i'|/\lambda}, & \textrm{otherwise}
\end{array}
\right.
\end{equation}
Therefore,
\begin{eqnarray}
\lefteqn{\frac{Pr[{E_i}^* \mid E_i]}{Pr[{E_i'}^* \mid E_i']} } \nonumber \\
&\le& \frac{1}{1 - \frac{1}{2} e^{-|\tau' - d_i'|/\lambda}} \nonumber \\
&\le& \frac{1}{1 - \frac{1}{2} e^{-(\tau' - \tau)/\lambda}} \nonumber \\
&\le& \frac{1}{1 - \frac{1}{2} e^{\ln\left(2-2e^{-\frac{1}{\lambda}}\right)}} \qquad \textrm{(by Equation~\ref{eqn:zealous-condition2})}\nonumber \\
&=& e^\frac{1}{\lambda}. \nonumber
\end{eqnarray}

Consider now case (iii) when $d_i = \tau$ and $d_i^* = \tau-1$. Since $\omega \in \Omega$  we have $d_i^* = 0$. Moreover, since ZEALOUS eliminates all counts in $H$ that are smaller than $\tau$, it follows that $Pr[E_i^* \mid E_i'] = 1$. Therefore,
\begin{eqnarray}
\frac{Pr[{E_i}^* \mid E_i]}{Pr[{E_i'}^* \mid E_i']} 
= Pr[{E_i}^* \mid E_i]
\le e^\frac{1}{\lambda}. \nonumber
\end{eqnarray}

In summary, $\frac{Pr[{E_i}^* \mid E_i]}{Pr[{E_i'}^* \mid E_i']}  \le e^{1/\lambda}$. Since $|\Delta| \le 2m$, we have
\begin{eqnarray} \nonumber
\lefteqn{\frac{Pr[Z(S)=\omega]}{Pr[Z(S')=\omega]}} \nonumber \\
&=& \prod_{i \in [1, |\Delta|]} \frac{Pr[{E_i}^* \mid E_i]}{Pr[{E_i'}^* \mid E_i']}  \nonumber \\
&\le& \prod_{i \in [1, |\Delta|]} e^{1/\lambda} \nonumber \\
&=& e^{|\Delta|/\lambda} \nonumber \\
&\le& e^{\epsilon} \qquad \textrm{(by Equation~\ref{eqn:zealous-condition1} and $|\Delta| \le 2m$)}. \nonumber
\end{eqnarray}
This concludes the proof of the theorem.

\subsection{Proof of Proposition 1}\label{app:comp_impl}
Assume that, for all search logs $S$, we can divide the output space $\Omega$ into to two sets $\Omega_1, \Omega_2 $, such that
	\[(1) \Pr[\mathcal{A}(S) \in \Omega_2] \leq  \delta, \text{ and }\]
for all search logs $S'$ differing from $S$ only in the search history of a single user and for all $O \in \Omega_1$:
	\begin{align*}
	(2) & \Pr[\San(S)=O] \leq  e^{\epsilon} \Pr[\San(S')=O] \text{ and }  \\
	& \Pr[\San(S')=O] \leq e^{\epsilon} \Pr[\San(S)=O].
	\end{align*}
Consider any subset $\mathcal{O}$ of the output space $\Omega$ of $\San$. Let $\mathcal{O}_1 = \mathcal{O} \cap \Omega_1$ and $\mathcal{O}_2 = \mathcal{O} \cap \Omega_2$. We have
\begin{eqnarray}
\lefteqn{\Pr[\San(S) \in \mathcal{O}]} \nonumber \\
 &=& \hspace{-3mm}\int_{O\in \mathcal{O}_2} \hspace{-3mm}\Pr[\San(S) = O] dO + \int_{O \in \mathcal{O}_1} \hspace{-3mm}\Pr[\San(S) = O] dO \nonumber \\
&\le&  \hspace{-3mm}\int_{O \in \Omega_2} \hspace{-3mm}\Pr[\San(S) = O] dO + e^{\epsilon}\hspace{-2mm} \int_{O\in \Omega_1}\hspace{-3mm} \Pr[\San(S') = O] dO  \nonumber \\
&\le& \delta + e^{\epsilon} \int_{O \in \Omega_1} \Pr[\San(S') = O] dO \nonumber \\
&\le& \delta + e^{\epsilon} \cdot \Pr[\San(S') \in \Omega_1]. \nonumber
\end{eqnarray}

\subsection{Proof of Proposition 2}\label{app:comp_alg}

	We have to show that for all search logs $S, S'$ differing in one user history  and for all sets $\mathcal{O}:$
	\[ \Pr[\hat{\mathcal{A}}(S) \in \mathcal{O}] \leq \Pr[\hat{\mathcal{A}}(S') \in \mathcal{O}] + 1/(|\mathcal{D}|-1). \]
	Since Algorithm $\hat{\mathcal{A}}$ neglects all but the first input this is true for neighboring search logs not differing in the first user's input. We are left with the case of two neighboring search logs $S, S'$  differing in the search history of the first user. Let us analyze the output distributions of Algorithm 1 under these two inputs $S$ and $ S'$. For all search histories except the search histories of the first user in $S, S'$ the output probability is  $1/(|\mathcal{D}|-1)$ for either input.  Only for the two  search histories of the first user $S_1, S'_1$ the output probabilities differ: Algorithm 1 never outputs $S_1$ given $S$, but it outputs this search history with probability $1/(|\mathcal{D}|-1)$  given $S'$. Symmetrically, Algorithm $\hat{\mathcal{A}}$ never outputs $S'_1$ given $S'$, but it outputs this search history with probability $1/(|\mathcal{D}|-1)$  given $S$.
	Thus, we have for all sets $\mathcal{O}$
	\begin{align}
		 \Pr[\hat{\mathcal{A}}(S) \in \mathcal{O}] 
		&= \sum_{d \in \mathcal{O} \cap (\mathcal{D}-S_1)} 1/(|\mathcal{D}|-1)\\
		&\leq 1/(|\mathcal{D}|-1) + \sum_{d \in \mathcal{O} \cap (\mathcal{D}-S_2)} 1/(|\mathcal{D}|-1)\\
		&=  \Pr[\hat{\mathcal{A}}(S) \in \mathcal{O}]  + 1/(|\mathcal{D}|-1)
	\end{align}

\end{sloppy}
\end{document}